\begin{document}

\title{The Weight in Enumeration}
\toctitle{The Weight in Enumeration}

\author{Johannes~Schmidt}
\tocauthor{Johannes~Schmidt}

\institute{J{\"o}nk{\"o}ping International Business School, J{\"o}nk{\"o}ping University, Sweden\\
  \texttt{johannes.schmidt@ju.se}}
\maketitle

\begin{abstract}
In our setting enumeration amounts to generate all solutions of a problem instance without duplicates.
We address the problem of enumerating the models of $B$-formul{\ae}.
A $B$-formula is a propositional formula whose connectives are taken from a fixed set $B$ of Boolean connectives.
Without imposing any specific order to output the solutions, this task is solved.
We completely classify the complexity of this enumeration task for all possible sets of connectives $B$ imposing the orders of (1) non-decreasing weight, (2) non-increasing weight; the weight of a model being the number of variables assigned to 1.
We consider also the weighted variants where a non-negative integer weight is assigned to each variable and show that this add-on leads to more sophisticated enumeration algorithms and even renders previously tractable cases intractable, contrarily to the constraint setting.
As a by-product we obtain also complexity classifications for the optimization problems known as $\MINONES$ and $\MAXONES$
which are in the $B$-formula setting two different tasks.
\keywords{Computational Complexity, Enumeration, non-decreasing weight, Polynomial delay, Post's Lattice, MaxOnes}
\end{abstract}

\section{Introduction}

We deal in this paper with algorithmic and complexity of \emph{enumeration}, the task of generating all solutions of a problem instance.
Over the last 15 years, in both practice and theory, one can observe a growing interest in studying enumeration problems which have previously been poorly studied compared to decision, optimization and counting problems.
The main reason for this may lie in the huge increase of the size of the data computers are nowadays demanded and able to process in everyday applications.

It is in the meanwhile commonly agreed to consider an enumeration algorithm \emph{efficient} if it has \emph{polynomial delay} (\cite{JoPaYa88,Schmidt09}), \ie, the time passing between outputs of two successive solutions is polynomial in the input size (while the total time of the output process is usually exponential, due to large solution sets).
Variants and different degrees of efficiency in this context exist, see e.g. \cite{JoPaYa88,StrozeckiPhD10}.
Known reductions for enumeration are essentially one-to-one parsimonious reductions, as opposed to counting complexity where a greater variety of useful reductions exist, see e.g. \cite{DuHeKo05,BuDyGoJaJeRe12}.
An interesting issue of enumeration is the order in which the solutions are output. Imposing different orders for an enumeration process may drastically change the complexity, see e.g. \cite{JoPaYa88,CrOlSc11,BoCrGaReScVo12}.

We focus in this paper on the task of enumerating the models of a propositional formula. This task has already been addressed in the context of Boolean constraint satisfaction problems (CSPs). One considers here formul{\ae} in generalized conjunctive normal form \cite{Schaefer78}, also called \emph{$\Gamma$-formul{\ae}} where $\Gamma$ is the \emph{constraint language}. In \cite{CrHe97} this task, $\EnumSAT$ for short, has been studied without imposing any special order. There is a polynomial delay algorithm if and only if the underlying constraint language $\Gamma$ is either Horn, or dual Horn, or affine, or 2CNF, unless $\P = \NP$.
It is worth mentioning that the algorithms underlying this result are all straight forward extensions of the corresponding decision procedures via the notion of \emph{self-reducibility} \cite{Schnorr76} which naturally leads to lexicographic order. In the non-Boolean domain the self-reducible fragment does not deliver all tractable cases anymore \cite{ScSc07} and things get much more involved.


Back to the Boolean domain, $\EnumSAT$ has also been considered imposing the order of non-decreasing weight ($\EnumSATinc$ for short), the weight of a model being the number of variables assigned to 1.
The weight is a natural parameter in Boolean CSPs that can be assimilated to the cost of an assignment. Hence, the task $\EnumSATinc$ can be seen as the task of enumerating the cheapest solutions first, then the more expensive ones in order of increasing cost.
In \cite{CrOlSc11} the task $\EnumSATinc$ has been studied for $\Gamma$-formul\ae.
There is a polynomial delay algorithm to enumerate the models of a propositional $\Gamma$-formula by order of non-decreasing weight if and only if $\Gamma$ is width-2-affine or Horn, unless $\P = \NP$. By duality in that context, the task of enumerating by order of non-increasing weight, $\EnumSATdec$ for short, is tractable if and only if $\Gamma$ is width-2-affine or dual Horn.

In this paper we reveal new tractable fragments of propositional logic for $\EnumSATinc$ and $\EnumSATdec$ by considering fragments of propositional logic by a different approach.
A \emph{$B$-formula} is a propositional formula whose connectives are taken from $B$, a fixed set of Boolean functions.
This approach covers different fragments than the classical constraint approach, e.g. monotonic, self-dual, 0-separating of degree $n$.
It has first been taken by Lewis \cite{Lewis79} who showed that the satisfiability problem for $B$-formul{\ae}, $\SAT(B)$ for short, is $\NP$-complete if and only if the set $B$ is able to express negation of implication ($x \land \neg y$), unless $\P = \NP$. Since then, a number of problems dealing with propositional formul{\ae} have been parameterized by $B$-formul{\ae} in order to get a finer classification of their complexity, e.g. equivalence \cite{Reith03}, implication \cite{BeMeThVo08imp}, circumscription \cite{Thomas09}, abduction \cite{CrScTh12}.

In \cite{BoCrGaReScVo12} the model enumeration problem has been studied in the context of $B$-circuits without imposing an order and imposing lexicographic order. Roughly speaking, a $B$-formula can be represented by a $B$-circuit without size-increase, but, in general, not vice versa. Therefore, tractability translates from $B$-circuits to $B$-formul{\ae}, whereas this does not automatically hold for hardness results. We observe however that only slight modifications in the hardness proof from \cite{BoCrGaReScVo12} suffice and one obtains the same classification for $B$-formul{\ae}.

Our main contribution lies in complete classifications for $\EnumSATinc$ and $\EnumSATdec$. We show that the models of a $B$-formula can efficiently be enumerated  by order of non-decreasing weight if and only if the connectives are either 0-separating, or affine, or conjunctive, or disjunctive, unless $\P = \NP$.
%
We further show that we can efficiently enumerate by order of non-increasing weight if and only if the connectives are either 0-separating of degree 2, or monotone, or affine, unless $\P = \NP$.
We also consider the weighted variants of $\EnumSATinc$ and $\EnumSATdec$
(denoted $\wEnumSATinc$ and $\wEnumSATdec$, respectively) where a
weight function $w: \{x_1, \dots, x_n\} \rightarrow \N$ assigns a non-negative integer weight to each variable and the weight of an assignment is the sum of the weights of the variables assigned to $1$. We show that for $\wEnumSATinc$ the previously tractable fragment of 0-separating connectives now compounds intractable cases.

We also shed new light on the optimization problems known as $\MINONES$ and $\MAXONES$ where the task is to find a model of minimal / maximal weight.
We use these tasks, together with their weighted and non-trivial variants, to obtain hardness of the enumeration problems.
These two tasks are in our setting not "the same": contrary to the classical constraint setting \cite{KhSuWi97}, no duality notion allows to easily derive the classification for $\MAXONES$ from the one for $\MINONES$, or vice versa. This is because the duality notion in our setting transforms $\MINONES$ (find a satisfying assignment with minimal number of 1's) into the task of finding a non-satisfying assignment with maximal number of 1's.
We show further that allowing weights on the variables renders previously tractable fragments intractable, contrarily to the classical constraint approach.


Among the algorithmic enumeration strategies we use, we apply a method we shall call \emph{priority queue method}. It has first been used in \cite{JoPaYa88} in order to enumerate all
maximal independent sets of a graph in lexicographical order. This method turned out to be applicable in much more generality \cite{KiSa06,Schmidt09,CrOlSc11}.
We use it to obtain various polynomial delay algorithms for $\EnumSATinc$ and $\EnumSATdec$ and their weighted variants.

We give another non-trivial enumeration algorithm for $\EnumSATdec$ for the fragment of connectives that are 0-separating of degree 2 (Proposition~\ref{prop:EnumSATdec-S02-inDelayP})
that may be intuitively best described by \emph{nested} or \emph{incremental bruteforce}: we use the Erd\H{o}s-Ko-Rado Theorem \cite{ErKoRa61} to obtain a combinatorial bound that allows us to \emph{buy time} \cite{ScSc07} from a relatively large number of models whose output process delivers then enough time to compute further, computationally more involving models that are stored and output afterwards.

The paper is organized as follows. In Section 2 we give the necessary preliminaries on complexity theory, propositional formul{\ae} and clones of Boolean functions. In Section 3 we briefly look at model enumeration without order prescription.
We treat the order of non-decreasing and non-increasing weight in Sections 4 and 5 respectively. We conclude in Section 6.
\section{Preliminaries}\label{sec:preliminaries}
\subsection{Complexity Theory}

For the decision problems the arising complexity degrees encompass the classes $\P$ and $\NP$. For our hardness results we employ logspace many-one reductions.

An \emph{enumeration problem} $E$ can be formalized by a triple $(I, Sol, \leq)$,
where $I$ are the instances, $Sol$ is a function mapping each instance $x\in I$ to its set of solutions $Sol(x)$ and $\leq$ is a partial order (possibly empty) on the solution space.
We say that an algorithm $A$ solves an enumeration problem $E=(I, Sol, \leq)$ if
for a given input $x\in I$, $A$ generates one by one the elements of $Sol(x)$ without repetition such that
for all $y,z \in Sol(x)$ such that $y < z$, $A$ outputs $y$ before $z$.

An enumeration algorithm runs in \emph{polynomial delay} if the delay until the first solution is output and thereafter the delay between any two consecutive solutions is bounded by a polynomial $p(n)$ in the input size $n$.
We denote \emph{DelayP} the class of enumeration problems that admit a polynomial delay algorithm and \emph{SpaceDelayP} those problems in $\DelayP$ that are solvable within polynomial space.


%
%

%

\subsection{Propositional Formul{\ae}}

We assume familiarity with propositional logic. 
For a propositional formula $\varphi$ we denote by $\Vars{\varphi}$ the set of variables occurring in $\varphi$.
We represent an assignment $\sigma: \Vars{\varphi} \to \{0,1\}^n$ usually as a tuple over $\{0,1\}$ or when convenient by the set of variables assigned to $1$, \ie, the empty set corresponds to $\vec{0}$ and $\Vars{\varphi}$ to $\vec{1}$.
A \emph{model} for a formula $\varphi$ is an assignment that satisfies $\varphi$.
A \emph{non-trivial} assignment is an assignment different from $\vec{0}$ and $\vec{1}$.
The \emph{complement} of an assignment $\sigma$ is defined as $\overline{\sigma}(x) = 0 \Leftrightarrow \sigma(x) = 1$.
We call a variable $x \in \Vars{\varphi}$ \emph{fictive}, if the assignment $x = 0$ can be extended to a model of $\varphi$ if and only if so can the assignment $x = 1$.
We denote by $\varphi[\alpha/\beta]$ the formula obtained from $\varphi$ by replacing all occurrences of $\alpha$ with $\beta$.

\subsection{Clones of Boolean Functions}\label{sec:clones}

	A \emph{Boolean function} is an $n$-ary function $f: \{0,1\}^n \rightarrow \{0,1\}$.
	For technical reasons we consider only Boolean functions of arity $> 0$.
	It is not difficult but just technical to include also functions of arity $0$ into
	our considerations.
	We denote the $n$-ary Boolean constants by $\false^n$ and $\true^n$, respectively.
	When the arity is not relevant, we indicate them also by $\false$ and $\true$,
	keeping in mind that they have at least one fictive coordinate.
	An $n$-ary assignment $m$ such that $f(m) = 1$ will be called \emph{model} of $f$.
  A \emph{clone} is a set of Boolean functions that is closed under superposition,
  \ie, it contains all projections 
  (that is, the functions $f(a_1, \dots , a_n) = a_k$ for all $1 \leq k \leq n$ and $n \in \N$) 
  and is closed under arbitrary composition.
  Let $B$ be a finite set of Boolean functions. 
  We denote by $[B]$ the smallest clone containing $B$ and call $B$ a \emph{base} for $[B]$.
  In 1941 Post identified the set of all clones of Boolean functions \cite{pos41}.
  He gave a finite base for each of the clones and showed that they form a lattice under the usual $\subseteq$-relation, 
  hence the name \emph{Post's lattice} (see, \emph{e.g.}, Figure~\ref{fig:complexity}).
  To define the clones we introduce the following notions, where $f$ is an $n$-ary Boolean function:
\begin{itemize}
  \itemsep 5pt 
  \item $f$ is \emph{$c$-reproducing} if $f(c, \ldots , c) = c$,
    $c \in \{0,1\}$.
%
  \item $f$ is \emph{monotonic} if $a_1 \leq b_1, \ldots , a_n \leq b_n$
  	implies $f(a_1, \ldots , a_n) \leq f(b_1, \ldots , b_n)$.
%
  \item $f$ is \emph{$c$-separating of degree $k$} if 
    for all $A \subseteq f^{-1}(c)$ of size $|A|=k$ 
    there exists an $i \in \{1, \ldots , n\}$ 
    such that $(a_1, \ldots , a_n) \in A$ implies $a_i = c$,
    $c \in \{0,1\}$.
%
    \item $f$ is \emph{$c$-separating} if 
      $f$ is $c$-separating of degree $|f^{-1}(c)|$.
%
    \item $f$ is \emph{self-dual} if $f \equiv \mathrm{dual}(f)$, where
    	$\mathrm{dual}(f)(x_1,\ldots,x_n) := \neg f(\neg x_1, \ldots , \neg x_n)$.
%
    \item $f$ is \emph{affine} if $f \equiv x_1 \xor \cdots \xor x_n \xor c$ with
    	$c \in \{0,1\}$.
  \end{itemize}

	\noindent
  A list of some clones with definitions and finite bases is given in Table~\ref{tab:clones}.
  
	We will often add some function $f\notin C$ to a clone $C$
	and consider the clone $C' = [C \cup \{f\}]$ generated out of $C$ and $f$. With Post's lattice one can determine this $C'$ quite easily: It is the
	lowest clone above $C$ that contains $f$. We will use in particular the identities 
  $[\CloneS_{12} \cup \{\true\}] = \CloneS_1$, $[\CloneD \cup \{\true\}] = \CloneBF$, $[\CloneR_1 \cup \{\false\}] = \CloneBF$, $[\CloneD_1 \cup \{\true\}] = \CloneR_1$, and $[\CloneS_{10} \cup \{\true\}] = \CloneM_1$.
	
  \medskip
  
  A propositional formula using only connectives from $B$ is called a \emph{$B$-formula}.
  
\begin{definition}
	Let $f$ be an $n$-ary Boolean function and let $B$ be a set of Boolean functions.
	A $B$-formula $\varphi$ with $\Vars{\varphi} = \{x_1, \dots, x_k\}$ is called $B$-\emph{representation} of $f$ if
	there is an index function $\pi: \{1, \dots, n\} \rightarrow \{1, \dots, k\}$ such that $\forall x_1, \dots, x_k \in \{0,1\}$ it holds
$f(x_{\pi(1)}, \dots, x_{\pi(n)}) = 1$ if and only if $\varphi \text{ evaluates to }1$.
\end{definition}
We note that such a $B$-representation exists for every $f \in [B]$. We note further that, if $f$ does not contain fictive coordinates, then there is also a $B$-representation for $f$ without fictive variables. We shall keep this in mind, since some problems we consider are not stable under introduction/elimination of fictive variables.
	
There is a canonical transformation of a $B_1$-formula $\varphi_1$
into a $B$-formula, if $B_1 \subseteq [B]$: replace every connective in $\varphi_1$ by its $B$-representation. Though, this may lead to an explosion of the formula size. This can happen when a $B$-representation for some $f \in [B]$ uses some input variable more than once and $\varphi_1$ is of linear nesting depth, see e.g. \cite{CrScTh12}.
We will nevertheless use this transformation idea in order to obtain reductions. This is possible since in the cases we encounter, we are always able to (re-)write $\varphi_1$ as a formula of logarithmic nesting depth. We note that this is not possible in general.
We call formul{\ae} of logarithmic nesting depth \emph{compact}.
  


  \begin{table*}
    \centering
    \fontsize{8pt}{10.4pt}\selectfont
    \begin{tabular}{p{1cm}ll}
      \specialrule{\heavyrulewidth}{0cm}{0cm}
      Name & Definition & Base \\
      \specialrule{\heavyrulewidth}{0cm}{0.03cm}
      $\CloneBF$ & All Boolean functions & $\{x \land y, \neg x\}$ \\
      \hline
      $\CloneR_1$ & $\{f \mid f \text{ is $1$-reproducing}\}$ & $\{x \lor y, x = y\}$ \\
      \hline
      $\CloneR_2$ & $\CloneR_0 \cap \CloneR_1$ & $\{\lor, x \land (y = z) \}$ \\
      \hline
      $\CloneM$ & $\{f \mid f \text{ is monotonic}\}$ & $\{x \lor y, x \land y, \false, \true\}$ \\
      \hline

      $\CloneS^n_{0}$ & $\{f \mid f \text{ is $0$-separating of degree } n\}$ & $\{x \to y, t_2^{n+1}\}$ \\
      \hline
      $\CloneS_0$ & $\{f \mid f \text{ is $0$-separating}\}$ & $\{x \to y\}$ \\
      \hline
      $\CloneS_1$ & $\{f \mid f \text{ is $1$-separating}\}$ & $\{x \wedge \neg y\}$ \\
      \hline
      $\CloneS^n_{00}$ & $\CloneS^n_0 \cap \CloneR_2 \cap \CloneM$ & $\{x \lor  (y  \land  z), t_2^{n+1}\}$ \\
      \hline
      $\CloneS_{00}$ & $\CloneS_0 \cap \CloneR_2 \cap \CloneM$ & $\{x \lor  (y  \land  z)\}$ \\
      \hline
      $\CloneS_{12}$ & $\CloneS_1 \cap \CloneR_2$ & $\{x \land  (y  \to  z)\}$ \\
      \hline
      $\CloneS_{10}$ & $\CloneS_1 \cap \CloneR_2 \cap \CloneM$ & $\{x \land  (y  \lor  z)\}$ \\
      \hline

      $\CloneD$ & $\{f \mid f \text{ is self-dual}\}$ & $ \{(x \land \neg y) \lor (x \land \neg z) \lor (\neg y \land  \neg z)\}$ \\
      \hline
      $\CloneD_1$ & $\CloneD \cap \CloneR_2$ & $\{d_1\}$ \\ 
      \hline
      $\CloneD_2$ & $\CloneD \cap \CloneM$ & $\{t_2^3\}$ \\
      \hline

      $\CloneL$ & $\{f \mid f \text{ is affine}\}$ & $\{ x \xor y,\true\}$ \\
      \hline

      $\CloneV$ & $\{f \mid  f $ is a disjunction of variables or constants$\}$ & $\{ x \lor y, \false,\true \}$ \\
      \hline

      $\CloneE$ & $\{f \mid  f $ is a conjunction of variables or constants$\}$ & $\{ x \land y, \false, \true \}$ \\
      \hline


      \specialrule{\heavyrulewidth}{0cm}{0cm}
    \end{tabular}
    \smallskip
    \caption{\label{tab:clones}%
      List of relevant Boolean clones with definitions and bases, where $t^q_p$ denotes the $q$-ary $p$-threshold function and
      $d_1(x,y,z) = (x \land y) \lor (x \land \neg z) \lor (y \land \neg z)$.}
  \end{table*}

  \begin{figure*}[t]
    \centering
    \includegraphics[width = 10.5cm]{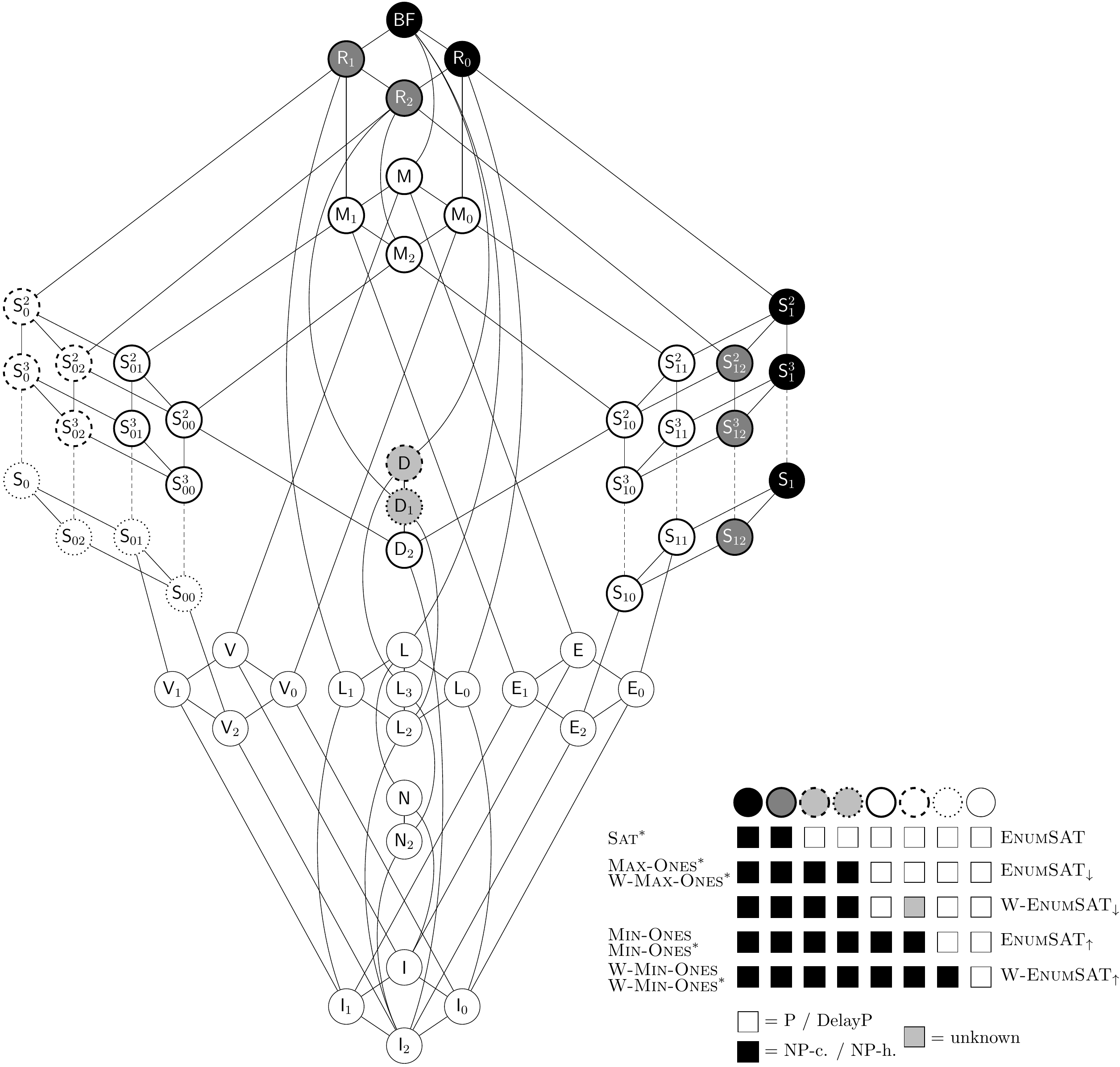}
    \caption{\label{fig:complexity} The complexity of all problems from this paper illustrated on Post's Lattice.}
  \end{figure*}

\section{Enumeration without Order Prescription}

We begin by looking at the model enumeration problem without order prescription.
\problemdef
{$\EnumSAT(B)$}
{a $B$-formula $\varphi$}
{generate all models of $\varphi$ (without duplicates)}
This problem has been studied in \cite{BoCrGaReScVo12} considering $B$-circuits instead of $B$-formul{\ae}.
It is not difficult to observe that the algorithms from \cite{BoCrGaReScVo12} also prove $\SpaceDelayP$-membership for $B$-formul{\ae} for the clones $\CloneM$, $\CloneL$, $\CloneD$ and $\CloneS_0^2$.
But we have to slightly modify the hardness proof from \cite{BoCrGaReScVo12} in order to deal with the issue of possible exponential blowup.
Hardness of $\EnumSAT(B)$ is inherited from $\SAT^*(B)$, the non-trivial satisfiability problem for $B$-formul{\ae} (given a $B$-formula, does it admit a non-trivial model $m$, \ie, $m \notin \{\vec{0},\vec{1}\}$?).
A look at Post's lattice shows us that $\CloneS_{12} \not\subseteq [B]$ if and only if either $[B] \subseteq \CloneM$, or $[B] \subseteq \CloneL$, or $[B] \subseteq \CloneD$, or $[B] \subseteq \CloneS_0^2$. The following proposition will therefore complete the classification.

	\begin{proposition}\label{prop:SATstarNPhard}
		Let $\CloneS_{12} \subseteq [B]$. Then $\SAT^*(B)$ is $\NP$-complete.
  \end{proposition}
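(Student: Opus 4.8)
Membership in $\NP$ is clear (guess a non-trivial assignment and evaluate the $B$-formula on it), so I focus on hardness. My plan is to reduce from the satisfiability problem for $3$-CNF formulas. The delicate point is that the reduction must already work for the minimal case $[B]=\CloneS_{12}$, where every $B$-formula computes a $0$-reproducing, $1$-reproducing and $1$-separating function; thus $\vec{1}$ is always a model, $\vec{0}$ is never one, and---unlike in Lewis' setting $\CloneS_1\subseteq[B]$, where one may force a variable to $0$---the usual device of simulating the constants $\true$ and $\false$ by fresh variables is unavailable. Everything therefore has to be routed through the single primitive $g(x,y,z):=x\land(y\to z)$, the base of $\CloneS_{12}$.

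First I would preprocess the input $3$-CNF $\psi$ over $x_1,\dots,x_n$ so that it has \emph{no} satisfying assignment equal to $\vec{1}$: replace $\psi$ by $\psi\land(\neg z)$ for a fresh variable $z$; this preserves satisfiability and forces $z=0$ in every model. Then, with one further fresh variable $t$, I would build a $B$-formula $\Phi$ computing
\[
  t \land \Bigl( \psi \lor (x_1 \land \dots \land x_n) \Bigr) \ =\ \bigwedge_{C}\ \bigwedge_{i=1}^{n} \bigl( t \land (C \lor x_i) \bigr),
\]
where $C$ ranges over the clauses of $\psi$. The point is that each disjunction $C\lor x_i$ contains the positive literal $x_i$, and a guarded clause with at least one positive literal is expressible by $g$: writing $C\lor x_i$ as $\bigl(\bigwedge_j y_{a_j}\bigr)\to\bigl(\bigvee_k y_{b_k}\bigr)$ with the $y_{b_k}$ the (non-empty list of) positive literals, one has $t\land(C\lor x_i)=g\bigl(t,\ \bigwedge_j y_{a_j},\ t\land\bigvee_k y_{b_k}\bigr)$, and the guarded disjunction is itself realised from $g$ alone through $t\land(x\lor y)=g\bigl(t,\,g(t,x,y),\,g(t,t,y)\bigr)$ and, recursively, $t\land(D_1\lor D_2)=g\bigl(t,\,g(t,G_1,G_2),\,G_2\bigr)$ where $G_\ell$ is the compact $\{g\}$-formula already computing $t\land D_\ell$. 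Balancing this recursion and the outer conjunction keeps $\Phi$ of logarithmic nesting depth and of polynomial size. Finally, since $g\in\CloneS_{12}\subseteq[B]$ and $\Phi$ is compact, replacing each $g$-gate of $\Phi$ by a $B$-representation of $g$ yields a polynomial-size $B$-formula $\Phi_B$ computing the same function, without the exponential blow-up discussed in Section~\ref{sec:clones}.

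It then remains to check correctness. By construction the models of $\Phi_B$ are exactly the assignments $(a_1,\dots,a_n,1)$ with $\psi(a_1,\dots,a_n)=1$ \emph{or} $(a_1,\dots,a_n)=\vec{1}$; hence $\vec{0}$ is never a model, $\vec{1}$ always is, and $\Phi_B$ has a non-trivial model if and only if $\psi$ has a satisfying assignment different from $\vec{1}$, which by the preprocessing is equivalent to $\psi$ being satisfiable. This is the required reduction.

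The step I expect to cost the most care is the gadget engineering of the second paragraph: checking that the guarded-clause and guarded-disjunction $\{g\}$-formulas compute the intended functions, that balancing really makes them compact, and that compactness---hence polynomial size---survives the passage from $B=\{g\}$ to an arbitrary $B$ with $\CloneS_{12}\subseteq[B]$ via the substitution of a $B$-representation of $g$ for each $g$-gate.
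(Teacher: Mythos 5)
Your proof is correct, but it takes a genuinely different route from the paper's. The paper does not reduce from 3-SAT directly: it invokes Lewis's result \cite{Lewis79} together with the clone identity $\CloneS_1 = [\CloneS_{12} \cup \{\false\}]$ to get $\NP$-hardness of $\SAT(B \cup \{\false\})$, then eliminates the constant by substituting a fresh variable $f$ for $\false$ and conjoining $\bigwedge_{i=1}^n t \land (f \to x_i)$; this forces any model with $f=1$ to be $\vec{1}$, so non-trivial models of the new formula correspond exactly to models of the original one extended by $f=0$, $t=1$. You attack the same obstacle --- $\vec{1}$ is unavoidably a model and no constants are available --- from the other side: you disjoin $\psi$ with the conjunction of all its variables so that the forced model $\vec{1}$ is absorbed into the intended solution set, and you separately kill $\vec{1}$ as a model of $\psi$ itself via the fresh variable $z$. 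The paper's route buys brevity: the gadget engineering is outsourced to Lewis and to the existence of $B$-representations of $\land$ and $x \land (y \to z)$, and correctness is a two-line case distinction on $f$. Your route buys self-containedness (no appeal to the $\CloneS_1$ classification) at the cost of the guarded-disjunction gadgets, which you do verify correctly and which you correctly keep compact. One point you must pin down: after the preprocessing, the big conjunction has to range over \emph{all} variables of the modified $\psi$, including $z$; if $z$ is omitted, the assignment with $x_1 = \dots = x_n = 1$, $z = 0$, $t = 1$ is a spurious non-trivial model of $\Phi$ regardless of the satisfiability of $\psi$. Your stated characterization of the models of $\Phi_B$ is only valid under that reading, so make it explicit.
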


  \begin{theorem}
    Let $B$ be a finite set of Boolean functions. Then $\EnumSAT(B)$  is
    \begin{enumerate}
      \item $\NP$-hard if $\CloneS_{12}\subseteq [B]$,
      \item in $\SpaceDelayP$ otherwise (\ie, $[B] \subseteq \CloneM$ or $[B] \subseteq \CloneL$ or $[B] \subseteq \CloneD$ or $[B] \subseteq \CloneS_0^2$).
    \end{enumerate}
  \end{theorem}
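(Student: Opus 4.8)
The plan is to treat the two items separately, after first recording that they are exhaustive: as observed just before Proposition~\ref{prop:SATstarNPhard}, inspecting Post's lattice gives that $\CloneS_{12} \not\subseteq [B]$ holds if and only if $[B] \subseteq \CloneM$, or $[B] \subseteq \CloneL$, or $[B] \subseteq \CloneD$, or $[B] \subseteq \CloneS_0^2$. Hence every finite $B$ falls under exactly one of the two cases, and it suffices to prove the two implications.

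For item~(1), I would derive hardness of $\EnumSAT(B)$ from that of $\SAT^*(B)$ via the standard observation that a polynomial-delay enumerator lets us inspect a bounded prefix of the solution list. Concretely, assume $\EnumSAT(B) \in \DelayP$ by an algorithm whose delay is bounded by a polynomial $p$. Given a $B$-formula $\varphi$ of size $n$, run this algorithm and let it produce at most its first three models, stopping after time $3\,p(n)$ or as soon as it halts. If one of the produced models differs both from $\vec{0}$ and from $\vec{1}$, then $\varphi$ has a non-trivial model; otherwise the algorithm has output at most two models, all of them in $\{\vec{0},\vec{1}\}$, so $\varphi$ has no non-trivial model. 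This decides $\SAT^*(B)$ in polynomial time, and since $\SAT^*(B)$ is $\NP$-complete for $\CloneS_{12} \subseteq [B]$ by Proposition~\ref{prop:SATstarNPhard}, we obtain $\P = \NP$; hence $\EnumSAT(B)$ is $\NP$-hard (no polynomial-delay algorithm unless $\P = \NP$).

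For item~(2), in each of the four clones $\CloneM$, $\CloneL$, $\CloneD$, $\CloneS_0^2$ I would invoke the enumeration algorithms of \cite{BoCrGaReScVo12}, which solve $\EnumSAT$ for $B$-circuits with polynomial delay and polynomial space whenever $[B]$ lies in one of these clones. The only point to check is that these algorithms survive the passage from circuits to formul{\ae}, and this is immediate: any $B$-formula $\varphi$ is in particular a $B$-circuit of size $O(|\varphi|)$ computing the same Boolean function, with input gates exactly $\Vars{\varphi}$, so the circuit algorithms apply verbatim and witness $\EnumSAT(B) \in \SpaceDelayP$. Unlike the hardness direction, no modification is required here, because formul{\ae} are a special case of circuits.

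The main obstacle is not in this theorem but is packaged into Proposition~\ref{prop:SATstarNPhard}: the natural way to prove $\NP$-hardness (as done for $B$-circuits in \cite{BoCrGaReScVo12}) substitutes each connective by its $\CloneS_{12}$-representation, and for $B$-formul{\ae} this substitution can blow up the formula size exponentially when the formula has linear nesting depth. Controlling this — by arranging that the reduction produces only compact formul{\ae}, i.e. of logarithmic nesting depth, in the sense of Section~\ref{sec:preliminaries} — is the delicate part, and it is precisely what Proposition~\ref{prop:SATstarNPhard} is there to deliver.
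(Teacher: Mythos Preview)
Your proposal is correct and follows essentially the same route as the paper: exhaustiveness of the two cases via Post's lattice, hardness inherited from $\SAT^*(B)$ through Proposition~\ref{prop:SATstarNPhard}, and tractability by invoking the circuit algorithms of \cite{BoCrGaReScVo12}, noting that $B$-formul{\ae} are special $B$-circuits. The paper leaves the step ``$\SAT^*(B)$ hard $\Rightarrow$ $\EnumSAT(B)$ hard'' implicit, whereas you spell out the standard three-outputs argument; and you correctly locate the only real work (avoiding exponential blowup when replacing connectives) inside Proposition~\ref{prop:SATstarNPhard}.
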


\section{Enumeration by Order of Non-decreasing Weight}

In this section we consider model enumeration by order of non-decreasing weight.
\problemdef
{$\EnumSATinc(B)$}
{a $B$-formula $\varphi$}
{generate all models of $\varphi$ by order of non-decreasing weight}
	\begin{proposition}\label{prop:EnumSATinc-VELS0-inDelayP}
		Let $[B] \subseteq \CloneV$ or $[B] \subseteq \CloneE$ or $[B] \subseteq \CloneL$ or $[B] \subseteq \CloneS_0$. Then $\EnumSATinc(B) \in \SpaceDelayP$.
		\rem{proof = APPENDIX}
  \end{proposition}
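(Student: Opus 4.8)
The plan is to treat the four clones separately, as they call for quite different ideas, and to reduce $\EnumSATinc(B)$ in each case to a loop ``for $k=0,1,\dots,|\Vars{\varphi}|$: output all models of weight exactly $k$''. For the claimed $\SpaceDelayP$-bound it suffices that each weight level is produced by a subroutine with polynomial delay and polynomial space, and that passing from level $k$ to level $k+1$ costs only polynomial time (so at most one ``empty'' level is examined between two consecutive outputs). For $\CloneV$, $\CloneE$, $\CloneL$ the model set has an explicit, polynomially computable description and the work is essentially combinatorial; $\CloneS_0$ is the substantial case.

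\textbf{The clones $\CloneV$, $\CloneE$, $\CloneL$.} A $B$-formula with $[B]\subseteq\CloneE$ is, by the basis of $\CloneE$, a conjunction of variable occurrences and constants: either $\varphi\equiv\false$ (no models) or $\varphi\equiv\bigwedge_{x\in\Vars{\varphi}}x$, whose unique model is $\vec{1}$, so $\EnumSATinc(B)$ is trivial. Dually $[B]\subseteq\CloneV$ gives $\varphi\equiv\true$, or $\varphi\equiv\false$, or $\varphi\equiv\bigvee_{x\in\Vars{\varphi}}x$ (all nonempty subsets), each enumerated by weight by the textbook successor algorithm for $k$-subsets, in space $O(|\Vars{\varphi}|)$. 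For $[B]\subseteq\CloneL$ the formula computes an affine form $\bigoplus_{x\in S}x \xor c$ over $X:=\Vars{\varphi}$; one recovers $c=\varphi(\vec{0})$ and $S=\{x : \varphi(m_x)\neq c\}$ by $|X|+1$ evaluations (where $m_x$ sets only $x$ to $1$), and after dispatching $S=\emptyset$ (constant) the weight-$k$ models are exactly the sets $T=A\sqcup C$ with $A\subseteq S$, $C\subseteq X\setminus S$, $|A|+|C|=k$, and $|A|\equiv b\pmod 2$ for $b:=1\xor c$. I would enumerate these by a nested loop over the admissible sizes $a:=|A|$ (those of the right parity in the range $[\max(0,k-|X\setminus S|),\min(k,|S|)]$), enumerating the $a$-element subsets $A$ of $S$ and, nested inside, the $(k-a)$-element subsets $C$ of $X\setminus S$, outputting $A\cup C$; every admissible $T$ is produced exactly once. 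Since for each admissible $a$ a pair $(A,C)$ exists and only $O(1)$ consecutive values of $a$ (and of $k$) are ever skipped, the delay is polynomial and the space is $O(|X|)$.

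\textbf{The clone $\CloneS_0$.} Here I would exploit the defining property. A $\CloneS_0$-formula $\varphi$ computes a $0$-separating function, so either $\varphi\equiv\true$ (all models, enumerated by weight as above) or there is a \emph{separating variable} $p$ with $\varphi[p/\true]\equiv\true$, i.e.\ $p=1$ forces $\varphi$; moreover $\CloneS_0\subseteq\CloneR_1$, so $\varphi$ is always satisfiable (witnessed by $\vec{1}$). Fixing such a $p$, the models split into the disjoint union of $\{T : p\in T\}$ and $\{T : p\notin T,\ \varphi[p/\false](T)=1\}$. The first block is enumerated by weight trivially (enumerate $(k-1)$-element subsets of $X\setminus\{p\}$ and adjoin $p$). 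For the second block I would recurse on $\varphi[p/\false]$ over the strictly smaller variable set $X\setminus\{p\}$, and then \emph{merge} the two weight-sorted streams with a priority queue of size two — exactly the use of the priority-queue method announced in the introduction. Since $\true\in[B]$ when $[B]=\CloneS_0$, the ``$p=1$''-branch never leaves the class; the variable set shrinks at each recursive call, so the recursion has depth $\le|X|$, contributing only polynomial factors to the delay and to the space (each level keeps $O(1)$ peeked solutions and one current $k$-subset).

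\textbf{The main obstacle.} Two points in the $\CloneS_0$-argument are genuinely non-routine and are where I expect the real difficulty. First, the residual $\varphi[p/\false]$ need \emph{not} be a $\CloneS_0$-formula: substituting $\false\notin\CloneS_0$ can destroy the clone property (e.g.\ $(x\to y)[y/\false]\equiv\neg x$), and such residuals can be arbitrarily complex as formulas (e.g.\ restricting $x\to(y\to z)$ to $z=0$ yields $\neg(x\land y)$, a functionally complete connective). One therefore needs a structural lemma asserting that, after detecting and fixing the variables forced by $\varphi[p/\false]$, whatever remains still lies in a class for which enumeration by non-decreasing weight is tractable — in the small cases I checked the residual is again $\CloneS_0$, or affine, or has a simple, explicitly describable (``co-small'') model set — and proving this in general amounts to controlling how membership in the clones below $\CloneS_0$ degrades under restriction by constants, presumably by a case analysis along Post's lattice. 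Second, \emph{locating} a separating variable (equivalently, deciding whether a $\CloneS_0$-formula is a tautology) is $\mathrm{coNP}$-hard in general — dually to Lewis's classification of $\SAT(B)$, the tautology problem $\mathrm{TAUT}(B)$ is $\mathrm{coNP}$-complete for $[B]=\CloneS_0$ — so one cannot simply test $\varphi[x/\true]\equiv\true$ for each $x$; the algorithm must either side-step this or use the promise $\varphi\in[B]$ together with the same structural lemma to pin down (or bypass) the separating variable in polynomial time. I expect the crux to be a single decomposition lemma for $\CloneS_0$-formulas delivering both the recursion and the required polynomial-time subroutines.
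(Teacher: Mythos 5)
Your treatment of $\CloneV$, $\CloneE$ and $\CloneL$ is correct and coincides with the paper's (which only invokes ``standard combinatorial methods''); your explicit parity split for the affine case is the intended argument, and the only slip is that for $\CloneE$ a $\true$-connective introduces fictive variables, so the model set is ``all supersets of the non-fictive variables'' rather than just $\vec{1}$. The genuine gap is in the $\CloneS_0$ case, precisely where you yourself locate the ``main obstacle'': the recursion on $\varphi[p/\false]$ cannot be repaired. Since $[\CloneS_0\cup\{\false\}]=\CloneBF$, the residual computes an arbitrary Boolean function, so the subproblem you recurse on contains $\EnumSATinc$ for unrestricted formul{\ae} (already $\MINONES$ for such residuals is $\NP$-hard); no structural lemma along Post's lattice can restore tractability, because the class of residuals is not a proper subclone. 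Your merge-of-streams design also genuinely needs the separating variable $p$ to be certified in advance, which, as you note, is a co-$\NP$-hard promise to verify.

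The paper's proof uses neither recursion nor any tautology test; it is a counting (``buy time'') argument. With $x_j$ the separating variable, at weight level $k$ the $\binom{n-1}{k-1}$ assignments with $x_j=1$ are all models (``steady'' models) and can be output with no computation, while the remaining $\binom{n-1}{k}$ weight-$k$ assignments with $x_j=0$ are simply \emph{evaluated} one by one --- evaluating $\varphi$ on a concrete assignment is polynomial no matter what function the restriction computes, so the complexity of the residual is irrelevant. Since $\binom{n-1}{k}/\binom{n-1}{k-1}=(n-k)/k\leq n$, interleaving $O(n)$ evaluations between consecutive steady outputs gives polynomial delay; the unsteady models found are buffered and emitted at the end of level $k$, and a finer interleaving (start emitting unsteady models while steady ones are still being output) yields polynomial space. (Your concern about locating $x_j$ is legitimate but can be sidestepped here: candidate steady models are verified by evaluation before output, and any failure exhibits a non-model whose zero-set shrinks the candidate set for $x_j$, so no tautology oracle is needed.) This abundance argument is the idea your proposal is missing, and without it or an equivalent the $\CloneS_0$ case does not go through.
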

\begin{proof}
The first three cases are easy. More interesting is the fourth case. Let $\varphi$ be a $B$-formula with $n$ variables. Since we are $0$-separating, we know that there is a special variable, call it $x_j$, such that any assignment with $x_j = 1$ is a model. The number of assignments of weight $k$ with $x_j=1$ (which all are satisfying assignments, we call them therefore \emph{steady} models) is $\binom{n-1}{k-1}$, while the number of assignments of weight $k$ with $x_j=0$ is $\binom{n-1}{k}$. Since the factor between $\binom{n-1}{k-1}$ and $\binom{n-1}{k}$ is polynomial, the output process of the \emph{steady} models delivers enough time to determine in the meanwhile
the set of \emph{unsteady} models, that is, models of weight $k$ with $x_j=0$. These can be stored and output afterwards.
Note that this method uses exponential space. One can however obtain polynomial space (still maintaining polynomial delay) by not storing for each $k$ the whole set of \emph{unsteady} models, but by starting outputting them while still outputting the \emph{steady} ones.
\end{proof}

Solving $\EnumSATinc$ requires to efficiently solve $\MINONES$, the task of computing a model of minimal weight.
We will therefore inherit hardness from $\MINONES$.

	\begin{proposition}\label{prop:minones-np-hard}
		Let $\CloneS_{10} \subseteq [B]$ or $\CloneS_{00}^n \subseteq [B]$ for an $n \geq 2$ or $\CloneD_2 \subseteq [B]$.
		Then $\MINONES(B)$ is $\NP$-hard.
  \end{proposition}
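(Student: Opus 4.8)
The plan is to reduce from a suitable $\NP$-hard problem — the natural candidate being $\MINONES$ for a constraint language known to be hard in the classical Boolean CSP setting, or more directly a weighted/cardinality version of $\SAT^*(B)$ or an independent-set-type problem. Concretely, I would aim to reduce from the decision version ``does $\varphi$ have a model of weight $\le k$?'' where $\varphi$ is taken over a constraint language for which $\MINONES$ is already known to be $\NP$-complete (e.g. the dual-Horn or not-all-equal type languages, via Khanna--Sugihara--Williamson~\cite{KhSuWi97}). The three hypotheses $\CloneS_{10}\subseteq[B]$, $\CloneS_{00}^n\subseteq[B]$ for some $n\ge 2$, and $\CloneD_2\subseteq[B]$ are each treated as a separate case, since the clone generated differs; in each case one fixes a small gadget formula in the base functions that expresses the needed ``building block'' relation (a clause, a conjunction-with-implication, a threshold, or the self-dual majority-type connective), and then assembles the instance.

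First I would, for each of the three cases, identify a concrete relation expressible by a $B$-formula whose models encode candidate solutions of the source problem in a weight-preserving way. For $\CloneS_{10}$ the base $\{x\land(y\lor z)\}$ gives conjunctions of disjunctions in a monotone, $\false$-avoiding form, which lets us encode a monotone circuit / hitting-set style instance; for $\CloneS_{00}^n$ the base $\{x\lor(y\land z), t_2^{n+1}\}$ additionally gives a threshold, which is exactly what is needed to force cardinality constraints; for $\CloneD_2$ the base $\{t_2^3\}$ (ternary majority) is self-dual and monotone, so I would use the standard trick of pairing each original variable with a fresh complementary variable and using majority to propagate, then arguing about the weight. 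In each construction I must ensure the transformation is only a \emph{polynomial} blow-up: as the preliminaries warn, naively substituting $B$-representations for connectives can explode the formula, so I would write the source instance as a \emph{compact} formula (logarithmic nesting depth) — e.g. balance a conjunction of $m$ clauses as a binary tree of depth $O(\log m)$ — before substituting, which keeps the size polynomial. The reduction is clearly computable in logspace since it is a local syntactic replacement plus a balanced re-bracketing.

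Second, I would establish correctness of the weight correspondence: a model of the constructed $B$-formula of weight $\le k'$ exists iff the source instance has a solution of the required cost, where $k'$ is an explicit affine function of $k$ and the number of auxiliary variables. The auxiliary variables introduced by the gadgets must either be forced to a fixed value (contributing a constant to every model's weight, so the threshold shifts uniformly) or come in complementary pairs (contributing exactly one per pair, again a uniform shift); either way the map on weights is monotone and affine, so minimizing weight on the target corresponds to minimizing the source objective.

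The main obstacle I expect is the $\CloneD_2$ case. Self-dual functions are the delicate ones for weight-sensitive problems precisely because — as the introduction emphasizes — duality swaps ``satisfying with few $1$'s'' and ``falsifying with many $1$'s,'' so one cannot lean on the classical $\MINONES$ classification as a black box. Making majority-gadgets both enforce the intended constraint \emph{and} behave predictably with respect to weight (so that cheap models of the gadget correspond to cheap solutions, not to complemented ones) will require the complementary-variable padding together with a careful bookkeeping argument, and checking that no unintended low-weight model slips through is where the real work lies. The other two cases should go through more routinely once the compactness/blow-up issue is handled, by direct encoding of a monotone $\NP$-hard cardinality problem such as minimum hitting set or minimum vertex cover.
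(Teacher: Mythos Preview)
Your choice of source problem ($\MINONES$ for positive monotone CNF, via~\cite{KhSuWi97}) and your attention to compactness are both in line with the paper. The $\CloneS_{10}$ case will go through essentially as you sketch: nesting $x\land(y\lor z)$ builds $t\land\bigwedge_i(a_i\lor b_i)$ with one extra conjoined variable~$t$, which is exactly the paper's move (replace the constant $\true$ by a fresh variable and shift the weight threshold by~$1$).

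The gap is in the other two cases, and it is the \emph{same} gap for both. You treat $\CloneS_{00}^n$ and $\CloneD_2$ as needing different ideas --- thresholds as ``cardinality constraints'' in the former, complementary-variable pairing in the latter --- but the paper handles them uniformly, and the key trick is neither of yours. Both clones contain a threshold function $t_q^p$ with $p>q\ge 2$ (namely $t_2^{n+1}\in\CloneS_{00}^n$ and $t_2^3\in\CloneD_2$), and a complete $t_q^p$-tree of depth~$d$ on $p^d$ fresh variables evaluates to~$0$ whenever fewer than $q^d$ of its inputs are~$1$. Choosing $d$ with $q^d>n$ (the number of original variables), this tree behaves like the constant $\false$ on every assignment of total weight $\le n$. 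One can then work in $[B\cup\{\false\}]\supseteq\{\land,\lor\}$ and reduce directly from $\MINONES$(positive-2CNF). Note that the fresh tree-variables are \emph{not} forced to any value by the formula --- contrary to your bookkeeping assumption --- it is the weight cap $k'=\min(n,k)$ that makes them effectively zero in any relevant model.

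Your complementary-variable plan for $\CloneD_2$ is unlikely to close on its own: with only monotone self-dual connectives you cannot enforce $x\neq\bar x$ (monotonicity lets you raise $x$ from $0$ to $1$ in any model), and self-duality merely pairs each model with its complement of \emph{high} weight, giving no leverage on low-weight models. Once you have the threshold-tree simulation of~$\false$, the $\CloneD_2$ case collapses to the same argument as $\CloneS_{00}^n$; the difficulty you anticipated does not materialise.
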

	\begin{proof}
	In the all three cases we reduce from $\MINONES$(positive-2CNF) ($\NP$-hard according to \cite{KhSuWi97}). The second and third case are technically involving, where we deal with the $q$-ary $p$-threshold function.
	\end{proof}

  \begin{theorem}\label{thm:minones_classification}
Let $B$ be a finite set of Boolean functions. Then $\EnumSATinc(B)$ 
\begin{enumerate}
\item is $\NP$-hard if $\CloneS_{00}^n\subseteq [B]$ for some $n \geq 2$ or $\CloneD_2 \subseteq [B]$ or $\CloneS_{10}\subseteq [B]$,
\item is in $\SpaceDelayP$ otherwise (\ie, $[B] \subseteq \CloneS_0$ or $[B] \subseteq \CloneV$ or $[B] \subseteq \CloneL$ or $[B] \subseteq \CloneE$).
\end{enumerate}
	\end{theorem}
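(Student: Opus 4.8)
Both bounds needed for the dichotomy are essentially already available, so the plan is to combine them and then verify, by a walk through Post's lattice, that the stated case distinction is exhaustive and mutually exclusive. Part~2 is immediate: the four conditions $[B]\subseteq\CloneS_0$, $[B]\subseteq\CloneV$, $[B]\subseteq\CloneL$, $[B]\subseteq\CloneE$ are exactly the hypotheses of Proposition~\ref{prop:EnumSATinc-VELS0-inDelayP}, which already gives $\EnumSATinc(B)\in\SpaceDelayP$ in each of them.

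For Part~1 I first observe that any polynomial-delay algorithm for $\EnumSATinc(B)$ solves $\MINONES(B)$ in polynomial time: run it on $\varphi$ for $p(n)+1$ steps, where $p$ bounds the delay; if by then it has emitted a model, this model has minimum weight, since solutions appear by non-decreasing weight, and otherwise $\varphi$ has no model at all. Hence $\EnumSATinc(B)$ inherits the $\NP$-hardness of $\MINONES(B)$, which Proposition~\ref{prop:minones-np-hard} supplies under each of the hypotheses $\CloneS_{00}^n\subseteq[B]$ (some $n\ge 2$), $\CloneD_2\subseteq[B]$, $\CloneS_{10}\subseteq[B]$.

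It remains to check that, for a finite $B$, exactly one of the two clauses applies. They are mutually exclusive because the functions $t_2^{n+1}$, $t_2^3$ and $x\land(y\lor z)$ witness, respectively, that $\CloneS_{00}^n$ ($n\ge 2$), $\CloneD_2$ and $\CloneS_{10}$ are contained in none of $\CloneS_0,\CloneV,\CloneL,\CloneE$ (each such function fails to be $0$-separating of full degree and is neither a disjunction, nor a conjunction, nor affine). For the converse, assume none of the three hardness hypotheses holds. Then in particular $\CloneS_{10}\not\subseteq[B]$, hence $\CloneS_{12}\not\subseteq[B]$ since $\CloneS_{10}\subseteq\CloneS_{12}$, and therefore, by the characterization recalled in Section~3, $[B]\subseteq\CloneM$ or $[B]\subseteq\CloneL$ or $[B]\subseteq\CloneD$ or $[B]\subseteq\CloneS_0^2$. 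If $[B]\subseteq\CloneL$ we are done. If $[B]\subseteq\CloneD$, then $\CloneD_2\not\subseteq[B]$ forces $[B]$ to contain only projections and negated projections, so $[B]\subseteq\CloneL$. If $[B]\subseteq\CloneM$, a routine inspection of the monotone part of Post's lattice shows that either $[B]\subseteq\CloneV$, or $[B]\subseteq\CloneE$, or $[B]\subseteq\CloneS_0$, or else $[B]$ lies above $\CloneS_{10}$, above $\CloneD_2$, or above some $\CloneS_{00}^n$, the last three contradicting the assumption. The case $[B]\subseteq\CloneS_0^2$ is analogous: either $[B]\subseteq\CloneS_0$, or $[B]$ consists only of projections and negated projections (so $[B]\subseteq\CloneL$), or $[B]$ lies above $\CloneD_2$ or above some $\CloneS_{00}^n$, again a contradiction. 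This exhausts the cases.

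I expect the genuine difficulty to lie not in this theorem but in the two propositions it quotes: Proposition~\ref{prop:minones-np-hard} (the $\MINONES$ reductions, in particular those dealing with the $q$-ary $p$-threshold function for $\CloneS_{00}^n$ and $\CloneD_2$) and the fourth, $\CloneS_0$-case of Proposition~\ref{prop:EnumSATinc-VELS0-inDelayP} (the ``buy time from the steady models'' delay argument). Within the present theorem the only delicate point is the last lattice step: since $\CloneS_{00}^2\supsetneq\CloneS_{00}^3\supsetneq\cdots$ is an infinite descending chain whose intersection $\CloneS_{00}$ is a subclone of $\CloneS_0$, one must verify that every clone below $\CloneS_0^2$ that is neither a subclone of $\CloneS_0$ nor of $\CloneL$ and does not contain $\CloneD_2$ already contains some concrete $\CloneS_{00}^n$; this is exactly what makes the ``$\NP$-hard if $\CloneS_{00}^n\subseteq[B]$ for some $n\ge 2$'' side of the dichotomy close up.
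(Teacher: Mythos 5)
Your proof is correct and follows the same route the paper takes implicitly: membership comes from Proposition~\ref{prop:EnumSATinc-VELS0-inDelayP}, hardness is inherited from $\MINONES(B)$ via Proposition~\ref{prop:minones-np-hard} (using exactly the observation that a polynomial-delay enumerator in non-decreasing weight yields the minimum-weight model within one delay period), and the completeness of the case distinction is a walk through Post's lattice, which the paper delegates to Figure~\ref{fig:complexity}. One small slip in that walk: a subclone of $\CloneD$ not containing $\CloneD_2$ need not consist only of projections and negated projections --- it can be $\CloneL \cap \CloneD = [x \xor y \xor z]$ --- but since that clone is still contained in $\CloneL$, your conclusion $[B]\subseteq\CloneL$ holds and nothing breaks.
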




\noindent
We turn to the weighted variant.

\noindent
The following method will deliver us several tractability results.
\begin{theorem}[Priority queue method \cite{JoPaYa88,Schmidt09}]\label{thm:delaypmethod}
	Let $E=(I,Sol,\leq)$ be an enumeration problem. If it holds
	\begin{enumerate}
		\item \label{enum:totpoly}				for each $x\in I$, $\leq$ restricted to $Sol(x)$ is total and computable in polynomial time in $|x|$,
		\item \label{enum:minsolpoly}			it can be determined in polynomial time in $|x|$ whether $Sol(x)$ is non-empty and if so,
																			then $\min(Sol(x))$ is computable in polynomial time in $|x|$,
		\item \label{enum:f}							there is a binary function $f$ such that for all $x\in I$ and for all $y\in Sol(x)$ holds:
			\begin{enumerate}
				\item \label{enum:fpoly}				$f(x,y)$ is computable in polynomial time in $|x|$		
				\item \label{enum:fsol}					$f(x,y) \subseteq Sol(x)$ 														
				\item \label{enum:fcov}					if $y\neq \min(Sol(x))$ then there is a $z\in Sol(x)$ such that $z < y$ and $y\in f(x,z)$,
			\end{enumerate}

	\end{enumerate}
	then $E\in\DelayP$.
\end{theorem}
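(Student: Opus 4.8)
The plan is to run the classical incremental priority-queue scheme of~\cite{JoPaYa88}, made duplicate-free by an auxiliary look-up structure. I would keep a priority queue $Q$ ordered by $\leq$ --- concretely a self-balancing search tree keyed by $\leq$, so that \emph{extract-min}, \emph{insert} and \emph{membership} each cost only polynomially many $\leq$-comparisons --- together with a dictionary $D$ recording the solutions already produced. The algorithm first invokes condition~(2) to test whether $Sol(x)$ is empty; if it is not, it computes $\min(Sol(x))$ by~(2) and places it into $Q$. Then, as long as $Q$ is nonempty, it extracts the $\leq$-minimum $y$ of $Q$, outputs $y$, adds $y$ to $D$, computes $f(x,y)$ by~(3a), and inserts into $Q$ every $z\in f(x,y)$ with $z\notin D$ and $z\notin Q$. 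Thus $Q$ contains at all times neither duplicates nor already-output solutions.

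For correctness of the output order, note that by~(1) the relation $\leq$ is a total order on the finite set $Sol(x)$, so we may write $Sol(x)=\{s_1<s_2<\dots<s_N\}$; the claim is that the algorithm outputs precisely $s_1,\dots,s_N$, in this order, which I would prove by induction on $k$. The element $s_1=\min(Sol(x))$ enters $Q$ at start-up and, being $\leq$-minimal, is extracted first. For the step, suppose $s_1,\dots,s_{k-1}$ have been output in order; the invariant is then that every member of $Q$ is a solution not yet output (it is a solution by~(3b), and output solutions are placed in $D$ and never re-inserted) and that $s_k\in Q$. The latter holds because $s_k\neq\min(Sol(x))$, so by the covering condition~(3c) there is $z\in Sol(x)$ with $z<s_k$ and $s_k\in f(x,z)$; this $z$ is some $s_j$ with $j<k$, and when $s_j$ was processed $s_k$ was offered for insertion and, not being in $D$ at that moment, was inserted. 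Since every member of $Q$ is a not-yet-output solution, hence $\geq s_k$, while $s_k$ itself lies in $Q$, the minimum of $Q$ is exactly $s_k$, so it is output next. Once all of $Sol(x)$ has been emitted, $Q$ runs empty and the algorithm halts.

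For the delay, exactly one solution is output per loop iteration, and one iteration performs a single extract-min together with at most $|f(x,y)|$ membership/insert operations on $Q$; by~(3a), $f(x,y)$ is computed in time polynomial in $|x|$, so $|f(x,y)|$ is polynomially bounded, and every operation on $Q$ uses polynomially many $\leq$-comparisons, each of which is polynomial by~(1). The delay before the first output is just the emptiness test plus one evaluation of $\min(Sol(x))$, polynomial by~(2). The point that needs the most care --- and the main obstacle in writing the argument rigorously --- is the duplicate handling: condition~(3c) does \emph{not} force the elements of $f(x,z)$ to be $\geq z$, so $f(x,z)$ may reintroduce smaller, already-output solutions; the guard ``$z\notin D$ and $z\notin Q$'' is precisely what guarantees that each iteration extracts one \emph{fresh} element and therefore keeps the delay polynomial, whereas a variant that only discards stale elements at extraction time could in principle stall over a long run of stale extractions. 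Note finally that $D$ may grow to exponential size, which is consistent with the conclusion being $\DelayP$ and not $\SpaceDelayP$.
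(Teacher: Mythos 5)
Your proof is correct and follows the same priority-queue scheme as the paper, including the same use of condition~(3c) to establish the invariant that the next solution in $\leq$-order is always present in the queue. The only difference is cosmetic: where you maintain an explicit dictionary $D$ of already-output solutions and test membership before inserting, the paper's algorithm simply discards every $z \in f(x,\ell)$ with $z \not> \ell$ (since outputs are produced in increasing order, anything already output is $\leq \ell$) and lets the priority queue itself eliminate duplicates among pending elements --- both mechanisms are valid and both still use exponential space, consistent with the conclusion $\DelayP$ rather than $\SpaceDelayP$.
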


\begin{proof}

Correctness of the following algorithm is not difficult to observe.
	\begin{algorithmic}[1]
\newcommand{\LINEIF}[2]{\STATE\algorithmicif\ {#1}\ \algorithmicthen\ {#2}}
\newcommand{\LINEELSIF}[2]{\STATE\algorithmicelsif\ {#1}\ \algorithmicthen\ {#2}}
\newcommand{\LINEELSE}[1]{\STATE\algorithmicelse\ {#1}}
\newcommand{\LINERETURN}{\algorithmicreturn\ }
\renewcommand{\algorithmiccomment}[1]{\STATE//\;#1}
			\LINEIF{$Sol(x)=\emptyset$}{\LINERETURN 'no'}
			\STATE Q = newPriorityQueue($\leq$)
			\STATE compute $\ell := \min(Sol(x))$
			\STATE Q.enqueue($\ell$)

			\WHILE{Q is not empty}
				\STATE $\ell$ := Q.dequeue
				\STATE output $\ell$
				\STATE compute $L:=f(x,\ell)$
				\FORALL{$z\in L$}
					\LINEIF{$z>\ell$}{Q.enqueue($z$)}
				\ENDFOR
			\ENDWHILE
		\end{algorithmic}
The priority queue is supposed to eliminate duplicates.
Note that this method may run in exponential space.
\end{proof}
In order to apply the method to the partial order induced by the weight of assignments, it suffices to extend it to a total order, for instance by the lexicographical order on assignments.

	\begin{proposition}
		Let $[B] \subseteq \CloneV$ or $[B] \subseteq \CloneE$. Then $\wEnumSATinc(B) \in \DelayP$.
  \end{proposition}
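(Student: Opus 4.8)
The plan is to apply the priority queue method (Theorem~\ref{thm:delaypmethod}) to the total order given by weight, ties broken lexicographically. Fix $[B] \subseteq \CloneV$; the case $[B] \subseteq \CloneE$ is dual (complement everything, turning conjunctions into disjunctions and non-decreasing weight into non-increasing weight, then reinstate by working with the dual formula), so I would only write out the disjunctive case in detail. The key observation is that a $\CloneV$-formula $\varphi$ is semantically equivalent to a single clause $\bigvee_{i \in S} x_i$ (or a constant) for some $S \subseteq \Vars{\varphi}$, and since the formula is compact we can read off $S$, the constants, and the set of fictive variables in polynomial time. Its models are then easy to describe: if $\varphi$ is unsatisfiable we return 'no'; if $\varphi \equiv \true$ every assignment is a model; otherwise $m$ is a model iff $m(x_i)=1$ for at least one $i\in S$.

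The three verification conditions are then discharged as follows. For condition~\ref{enum:totpoly}, the order is weight-then-lexicographic, which is total and clearly polynomial-time computable. For condition~\ref{enum:minsolpoly}, satisfiability is trivially decidable, and the minimum model is computed by a small case analysis: if $\varphi \equiv \true$ the minimum is $\vec 0$; otherwise the minimum-weight model has weight $1$ and among weight-$1$ models the lexicographically smallest one sets exactly the variable $x_i\in S$ of smallest index to $1$ (and $0$ everywhere else). For condition~\ref{enum:f}, I would define the successor-generating function $f(\varphi, m)$ to output the polynomially many assignments obtained from $m$ by the standard ``next assignment in weight-lex order'' moves: either flip a single $0$ to $1$, or (to cross a weight boundary or move within a weight level) take any assignment reachable by bounded local modification that is guaranteed to be a model; concretely, $f(\varphi,m)$ returns all models among the $O(n^2)$ assignments that differ from $m$ by flipping one variable up, flipping one up and one down, or — for the weight-boundary step — the weight-$(k+1)$ lexicographically-least model together with a few neighbours. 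Each such candidate is checkable against the clause description in polynomial time (condition~\ref{enum:fpoly}), lies in $Sol(\varphi)$ by construction after filtering (condition~\ref{enum:fsol}), and condition~\ref{enum:fcov} holds because any model $m \neq \min(Sol(\varphi))$ has an immediate weight-lex predecessor $z$ that is also a model (here we use that $S\neq\emptyset$: decreasing the weight of a model by a single down-flip, chosen to preserve at least one set variable from $S$, keeps it satisfying, and if no such down-flip is possible then $m$ sits at a weight-boundary and its predecessor is the weight-$(k-1)$ lex-greatest model, which $f$ emits).

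The main obstacle I anticipate is condition~\ref{enum:fcov} across weight boundaries: when $m$ is the lex-least model of its weight $k$, its predecessor in the global order is the lex-greatest model of weight $k-1$, and I must make sure that this predecessor actually generates $m$ via $f$, i.e. that $f$ applied to the lex-greatest weight-$(k-1)$ model includes the lex-least weight-$k$ model. For $\CloneV$ this is manageable because the model set is ``upward closed relative to $S$'' in a strong sense — essentially every weight-$(k-1)$ model can be pushed up to some weight-$k$ model by a single up-flip, and a short combinatorial argument identifies which weight-$(k-1)$ model reaches the lex-least weight-$k$ one. I would package this as a lemma describing the weight-lex predecessor of an arbitrary model of a single clause and then simply check that $f$, as defined, always contains that predecessor's image. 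Once that lemma is in place, Theorem~\ref{thm:delaypmethod} yields $\wEnumSATinc(B) \in \DelayP$ directly; note we do not claim polynomial space here, consistent with the priority queue method's exponential-space caveat.
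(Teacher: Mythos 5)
Your overall strategy --- the priority queue method of Theorem~\ref{thm:delaypmethod} applied to the weight-then-lexicographic order, with a successor function built from local flips --- is essentially the paper's, which reduces the $\CloneV$ case to $\subsetsum$ and takes $f(C,S)=\{S\cup\{i\}\mid i\in\{1,\dots,n\}\}$. However, one ingredient of your $f$ is genuinely broken. The ``weight-boundary step'' asks $f$ to emit ``the weight-$(k+1)$ lexicographically-least model'': in the \emph{weighted} problem the next attainable weight level above $k$ is $\min\{\delta(m)\mid m\models\varphi,\ \delta(m)>k\}$, and already for a single clause determining this value (let alone the lex-least model attaining it) is a subset-sum question on the weights $w(x_i)$, so this component is not polynomial-time computable and condition~\ref{enum:fpoly} fails for it. The root cause is a misreading of condition~\ref{enum:fcov}: you argue that every model must be emitted by its \emph{immediate} weight-lex predecessor, whereas the method only requires that every non-minimal solution be emitted by \emph{some} strictly smaller solution --- the queue takes care of the ordering. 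The same lapse into unweighted mode shows in your treatment of condition~\ref{enum:minsolpoly}, where you assert the minimum model ``has weight $1$'': here the minimum is a cheapest singleton $\{x_i\}$ with $i\in S$ minimizing $w(x_i)$, whose weight may be $0$ or arbitrarily large.

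The good news is that the broken component is unnecessary: your first two families of moves already suffice. Take $f(\varphi,m)$ to be all \emph{models} among the assignments $m\cup\{i\}$ and $(m\setminus\{j\})\cup\{i\}$. For condition~\ref{enum:fcov}, let $y\neq\min(Sol(\varphi))$ be a model of $\bigvee_{i\in S}x_i$. If $y$ contains two elements of $S$, or an element outside $S$, then some down-flip $z=y\setminus\{j\}$ is still a model, satisfies $z<y$ (the weight drops, or stays equal with a lexicographic decrease when $w(x_j)=0$), and $y\in f(\varphi,z)$ via an up-flip. Otherwise $y=\{x_i\}$ with $i\in S$, and since $y$ is not the minimum there is a strictly smaller singleton $z=\{x_j\}$ with $j\in S$ from which the swap move produces $y$. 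With this repair (and an analogous, in fact easier, treatment of $\CloneE$, where the model set is $\{m\mid T\subseteq m\}$ and only the fictive variables are free, so the paper again just invokes $\subsetsum$ rather than your dualization detour), your proof goes through. Incidentally, your explicit handling of variables occurring only inside constants is more careful than the paper's own proof, which simply declares every assignment except possibly $\vec{0}$ to be a model.
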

  \begin{proof}
In the first case a $B$-formula can be seen as disjunction of variables and constants.
All assignments are models, with the possible exception of $\vec{0}$.
Thus, we reduce our problem to 
\problemdef{$\subsetsum$}
{A sequence of non-negative integers $C = (w_1,\dots,w_n)\in\N^n$}%
{generate all subsets $S \subseteq \{1, \dots, n\}$ by non-decreasing weight $\delta(S)$, where $\delta(S)=\sum_{i\in S} w_i$}
This task can be solved in polynomial delay and polynomial space by a dynamic programming method if the weights on the variables are polynomially bounded \cite{CrOlSc11}. Otherwise, the priority queue method from Theorem~\ref{thm:delaypmethod} is applicable with
$f(C,S) = \{S \cup \{i\} \mid i \in \{1, \dots, n\}\}$.

In the second case a $B$-formula can be seen as conjunction of variables and constants.
If this disjunction contains a constant $C_0$, then there are no models.
Otherwise $\vec{1}$ is the only model, up to fictive variables occurring in constants $C_1$.
Again, we reduce our problem to $\subsetsum$ as in the previous case.
  \end{proof}

	\begin{proposition}\label{prop:wEnumSATinc-L-inDelayP}
		Let $[B] \subseteq \CloneL$. Then $\wEnumSATinc(B) \in \DelayP$.
  \end{proposition}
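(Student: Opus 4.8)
The plan is to show that when $[B]\subseteq\CloneL$, a $B$-formula $\varphi$ is affine, so its model set is a coset of a linear subspace of $\mathbb{F}_2^n$, and to exploit this structure to apply the priority queue method of Theorem~\ref{thm:delaypmethod} with the weight-then-lex total order. First I would recall that every $f\in[B]\subseteq\CloneL$ is of the form $x_{i_1}\xor\cdots\xor x_{i_k}\xor c$, and that a compact $B$-formula (logarithmic nesting depth, cf.\ the discussion after the definition of $B$-representation) can be evaluated symbolically to produce, in polynomial time, an affine system $Ax = b$ over $\mathbb{F}_2$ whose solution set is exactly $Sol(\varphi)$; here we must be slightly careful because a $B$-formula need not be compact, but in the affine case each connective is linear so the XOR of its inputs can be propagated directly and the affine normal form is computed bottom-up in linear time regardless of nesting depth. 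Gaussian elimination then decides emptiness and, when non-empty, yields a particular solution $p$ together with a basis $v_1,\dots,v_d$ of the associated linear space $V$, so $Sol(\varphi) = p + V$.

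Next I would verify the three hypotheses of Theorem~\ref{thm:delaypmethod}. Condition~\ref{enum:totpoly} is immediate: extend the partial order by weight to the total order that breaks ties lexicographically; it is clearly polynomial-time computable. For condition~\ref{enum:minsolpoly} I need to compute the minimum-weight solution (with lexicographic tie-break) of the coset $p+V$ with respect to the integer weight function $w$. This is the one genuinely substantive point, so it is where I would concentrate the argument: I would reduce it to the previous proposition's machinery by observing that minimizing $\sum_{i\in S}w_i$ over an affine subspace of $\mathbb{F}_2^n$ is itself a polynomial-time task. One clean way: after Gaussian elimination put the system in reduced form so that each of the $d$ free variables, when toggled, flips a known fixed set of pivot variables; then greedily, processing free coordinates in a suitable order, decide each free bit to minimize total weight — but naive greedy can fail because toggling one free variable can flip several pivots, so instead I would argue via the structure of $\CloneL$: an affine formula's models, viewed as $p+V$, can be re-coordinatized by the free variables $y_1,\dots,y_d$, and the weight becomes an affine-linear function of the $y_j$ over the integers only if the linear pieces are disjoint, which need not hold; the robust route is therefore to note that $\MINONES(\CloneL)$ is solvable in polynomial time (it is the affine case, solvable by the standard linear-algebra argument used already for Horn/affine decision, and indeed $\EnumSATinc(B)\in\SpaceDelayP$ for $[B]\subseteq\CloneL$ by Proposition~\ref{prop:EnumSATinc-VELS0-inDelayP}, whose proof already produces minimum-weight models), and that the same algorithm handles arbitrary integer weights because minimizing $\sum w_i x_i$ subject to $Ax=b$ over $\mathbb{F}_2$ is polynomial-time via the matroid/linear-algebra reduction — concretely, fix an ordering of coordinates by decreasing weight and process them as in matroid greedy on the binary matroid of $A$, which yields the lex-least minimum-weight solution.

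For condition~\ref{enum:f} I would define, for a model $y = p + \sum_{j\in J}v_j$ written in terms of the free coordinates, the neighbourhood $f(\varphi,y)$ to be the set of models obtained from $y$ by a single ``elementary step'' in free-coordinate space together with the minimum-weight model; more precisely $f(\varphi,y)=\{\,\mathrm{minmodel}(\varphi)\,\}\cup\{\,y' : y' \in Sol(\varphi),\ y' \text{ differs from } y \text{ in exactly one free coordinate}\,\}$. Each such $y'$ is computed in polynomial time by toggling one free bit and recomputing the dependent pivots (condition~\ref{enum:fpoly}), and by construction $f(\varphi,y)\subseteq Sol(\varphi)$ (condition~\ref{enum:fsol}). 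For the covering property~\ref{enum:fcov} I would argue that from any non-minimum model $y$ one can reach a strictly smaller model (in the weight-then-lex order) by toggling a single free coordinate: starting from $y$ and the minimum-weight model $m$, their difference lies in $V$, so it is a sum of basis vectors; flipping the free coordinates of $y$ one at a time toward those of $m$ gives a path in $Sol(\varphi)$ from $y$ to $m$, and by choosing at each step the flip that most decreases (or, among decreases, lexicographically least) the objective — such a flip exists because the straight-line path to $m$ cannot be everywhere weakly increasing unless $y=m$ — we obtain the required $z<y$ with $y\in f(\varphi,z)$. I expect the main obstacle to be exactly this last point: ensuring that single-free-coordinate flips suffice to witness $<$-predecessors, since the weight function interacts with the linear dependencies in a way that makes a monotone descent path non-obvious; handling it cleanly will likely require picking the free-variable basis and the coordinate order carefully (e.g.\ so that the weight is monotone along the canonical path from $y$ to $m$), or alternatively replacing ``single free-coordinate flip'' by ``flip the free coordinates so as to pass to the $<$-predecessor that is closest in Hamming distance on the free coordinates,'' which still yields a polynomial-size, polynomial-time $f$.
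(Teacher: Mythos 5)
Your proposal goes wrong at the first structural step, and the error propagates. A $B$-formula $\varphi$ with $[B]\subseteq\CloneL$ is a \emph{single} Boolean function obtained by composing affine connectives, hence is itself affine: $\varphi\equiv x_{i_1}\xor\cdots\xor x_{i_k}\xor c$, i.e.\ \emph{one} linear equation over GF(2). Its models are exactly the assignments setting an even (resp.\ odd) number of non-fictive variables to $1$, with the fictive variables unconstrained --- this is precisely how the paper treats the unweighted affine case in Proposition~\ref{prop:EnumSATinc-VELS0-inDelayP}. You instead cast $Sol(\varphi)$ as the solution set of a general system $Ax=b$ with several pivots and free variables (a situation that cannot arise here, since conjunction is not affine and so no ``system'' can be built inside $\CloneL$), and your verification of condition~\ref{enum:minsolpoly} then rests on the claim that minimizing $\sum_i w_ix_i$ subject to $Ax=b$ over GF(2) is polynomial-time ``via matroid greedy.'' That claim is false: already with unit weights this is the coset-weight / nearest-codeword problem, which is $\NP$-hard, and correspondingly $\MINONES$ for general affine \emph{constraint languages} is $\NP$-hard --- you are importing intuition from the wrong setting, and matroid greedy computes a minimum-weight basis of the column matroid, not a minimum-weight solution of the system. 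The step is only salvageable because the ``system'' here has a single row, for which the minimum is trivial (take no non-fictive variable in the even case, the cheapest single non-fictive variable in the odd case, and no fictive variable of positive weight); but you never establish or use this.

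Condition~\ref{enum:fcov} is a second genuine gap. Your neighbourhood ``flip one free coordinate and recompute the pivots'' fails the covering property even for a single equation: for $\varphi=x_1\xor x_2\xor x_3$ with pivot $x_1$ and weights $w_1=5$, $w_2=1$, $w_3=2$, the model $\{x_3\}$ is not minimal, yet its only single-free-flip neighbours are $\{x_1\}$ and $\{x_1,x_2,x_3\}$, both of strictly larger weight, so no $z<\{x_3\}$ has $\{x_3\}\in f(\varphi,z)$. You flag this yourself as the main obstacle and leave it unresolved, so the proof is incomplete here as well. The paper instead exploits the parity structure directly and takes $f(\varphi,m)=\{m\cup\{x\}\mid x\text{ fictive}\}\cup\{m\cup\{x,y\}\mid x,y\text{ non-fictive},\ m\cap\{x,y\}=\emptyset\}$: adding two non-fictive variables preserves parity, so $f(\varphi,m)\subseteq Sol(\varphi)$, and since weights are non-negative a model never precedes one of its subsets, so every non-minimal model containing a fictive variable or at least two non-fictive ones is covered by shrinking. (To cover the remaining singleton models $\{v\}$ with $v$ non-fictive in the odd-parity case one should additionally allow swapping one non-fictive variable for another, or output that weight level directly; this is a routine patch.) To repair your argument, first reduce to the single-equation normal form --- which your bottom-up XOR propagation already computes --- and then use such a parity-preserving neighbourhood; the general linear-algebra apparatus is both unnecessary and, where you lean on it, incorrect.
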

  \begin{proof}
		Apply Theorem~\ref{thm:delaypmethod} with 
			$f(\varphi,m) = \{m \cup \{x\} \mid x \text{ fictive}\} \; \cup \; \{m \cup \{x,y\} \mid x,y \text{ not fictive} \text{ and } m \cap \{x,y\} = \emptyset\}$.
  \end{proof}

\noindent
The following previously tractable fragment becomes intractable.
	\begin{proposition}\label{prop:wMINONES-S00-NPhard}
		Let $\CloneS_{00} \subseteq [B]$. Then $\wMINONES(B)$ is $\NP$-hard. \rem{proof = APPENDIX}
  \end{proposition}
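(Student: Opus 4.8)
The plan is to exhibit a logspace many-one reduction from $\MINONES(\text{positive-2CNF})$, which is $\NP$-hard by \cite{KhSuWi97}, to $\wMINONES(B)$ for any $B$ with $\CloneS_{00} \subseteq [B]$. Recall that a positive-2CNF formula is a conjunction of clauses of the form $x_i \lor x_j$; the base $\{x \lor (y \land z)\}$ of $\CloneS_{00}$ is exactly the kind of connective we want to exploit, since iterating it gives formul{\ae} of the shape $x \lor (y_1 \land y_2) \lor (y_3 \land y_4) \lor \cdots$. The first step is therefore to observe that a single positive clause $x_i \lor x_j$ is \emph{not} directly an $\CloneS_{00}$-function (it is monotone but not $1$-reproducing in the strong sense required, and more to the point it is not $0$-separating), so a naive clause-by-clause translation fails. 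Instead I would introduce a fresh global variable $z$ that is forced (via weights, see below) to be $0$ in any optimal solution, and replace each clause $x_i \lor x_j$ by $z \lor (x_i \land x_j)$ — wait, that changes the semantics. The correct move is to use the associativity/idempotency tricks available in $\CloneS_{00}$: the formula $\bigvee_k (x_{i_k} \lor x_{j_k})$ over all clauses is a single disjunction, and a disjunction of variables lies in $\CloneV \subseteq \CloneM$, not in $\CloneS_{00}$; so the reduction must respect the conjunctive structure of the 2CNF. The key is that $\CloneS_{00} \supseteq \CloneS_{10}$ is false — rather, $[B]$ containing $\CloneS_{00}$ means $[B]$ is large enough that, by the identity $[\CloneS_{00} \cup \{\true\}] = \CloneS_{00}$ and inspection of Post's lattice, $[B]$ contains all of $\CloneE = \CloneS_{00} \cap \CloneE$-generated conjunction together with the base connective $x \lor (y \land z)$. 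Concretely: I would write the target $B$-formula as a \emph{compact} (logarithmic-depth) rewriting of $\bigwedge_{(i,j)} \psi_{ij}$ where $\psi_{ij}$ is a $B$-representation of $x_i \lor x_j$ obtained by $x_i \lor (x_j \land x_j)$, using the $\CloneS_{00}$-base connective and identifying the last two arguments — this is legitimate because $\CloneS_{00}$-representations may reuse variables, and the conjunction of these $\psi_{ij}$ is again expressible since $\land \in \CloneS_{00}$ (it is $x \land y = \false \lor (x \land y)$, using a constant if available, or built from the base on more variables otherwise). Balanced-tree the big conjunction to keep depth logarithmic, so the blowup stays polynomial.

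The second step is to handle weights. Given the positive-2CNF instance $\phi$ on variables $x_1, \dots, x_n$, I assign weight $1$ to each $x_i$ and weight $0$ to every auxiliary variable introduced by the $B$-representations and by compacting (there are polynomially many of these). If the rewriting forces some auxiliary variable $z$ to be $1$ in all models (e.g.\ a fictive variable hidden inside a constant $\true$), giving it weight $0$ neutralises it; if instead we need an auxiliary variable to be $0$ in optimal models, its weight $0$ plus the monotone structure of $\CloneS_{00}$ (every $\CloneS_{00}$-function is monotone, so setting a variable from $1$ to $0$ never destroys a model unless it is genuinely needed) ensures this. Then the weight of an assignment to the $B$-formula equals the number of $x_i$'s set to $1$, so a minimum-weight model of the $B$-formula projects to a minimum-weight model of $\phi$, and conversely any minimum model of $\phi$ extends to one of the $B$-formula by setting auxiliaries minimally (possible because the $B$-representations are exact). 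This establishes the reduction, and since all the rewriting and weight assignment is clearly logspace-computable, $\NP$-hardness of $\wMINONES(B)$ follows.

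The main obstacle I anticipate is purely in the $\CloneS_{00}$-gadgetry: producing a $B$-representation of the clause $x \lor y$ — and of the conjunction connective — that (a) uses only connectives from $B$ (not merely from $\CloneS_{00}$, though $\CloneS_{00} \subseteq [B]$ bridges this via the canonical transformation), (b) has no spurious variables that would distort the weight count, and (c) can be assembled into a logarithmic-depth compact formula so the size stays polynomial. The subtlety in (a) is that $\CloneS_{00}$'s base $x \lor (y \land z)$ does not obviously give $\land$ on its own without a constant; one checks that $x \land y$ is obtained as $x \land y \equiv ((x \lor (x \land x)) \land \dots)$ — more carefully, since $\CloneS_{00} \subseteq \CloneE$ is false but $\CloneE \cap \CloneM \supseteq$ ... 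I would instead note $x \land y = f(x, y, y)$ is wrong, so the honest route is: $\CloneS_{00}$ contains $\land$ because $\land$ is $0$-separating of every degree, monotone, and in $\CloneR_2$, hence $\land \in \CloneS_{00}$, so a $\CloneS_{00}$-representation of $\land$ exists, and then by $\CloneS_{00} \subseteq [B]$ a $B$-representation exists; the same for $x \lor y \in \CloneM \setminus \CloneS_{00}$ — here's the real crux: $x \lor y$ is \emph{not} in $\CloneS_{00}$ (it is not $0$-separating: $f^{-1}(0) = \{00\}$ has size $1$, so actually it \emph{is} $0$-separating of degree $1$, hence of every degree since $|f^{-1}(0)|=1$), so in fact $x \lor y \in \CloneS_{00}$ after all, which makes the whole reduction smooth. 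So the "obstacle" largely dissolves under the observation that binary $\lor$ and $\land$ both live in $\CloneS_{00}$, and the only genuine work is the logarithmic-depth assembly plus the bookkeeping that the canonical $B_1 \to B$ transformation (Section on preliminaries) stays polynomial on these compact formul{\ae}.
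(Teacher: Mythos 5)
Your proposal contains a fatal error at its crux: you conclude that ``binary $\lor$ and $\land$ both live in $\CloneS_{00}$,'' but $x \land y \notin \CloneS_{00}$. Your justification --- that $\land$ is $0$-separating of every degree --- is wrong: $(x\land y)^{-1}(0) = \{(0,0),(0,1),(1,0)\}$, and no coordinate is $0$ on all three tuples, so $\land$ is not $0$-separating (it is $1$-separating; in Post's lattice $\CloneE_2$ sits below $\CloneS_{10}$, not below $\CloneS_{00}$). Consequently, when $[B]=\CloneS_{00}$ there is no $B$-representation of $\land$, and your plan of writing a positive-2CNF instance as a compact $B$-formula $\bigwedge_{(i,j)}\psi_{ij}$ collapses: a conjunction of positive clauses such as $(x_1\lor x_2)\land(x_3\lor x_4)$ is itself not $0$-separating, hence not in $\CloneS_{00}$ and not expressible. (Your side claim $[\CloneS_{00}\cup\{\true\}]=\CloneS_{00}$ is also false, since $\true\notin\CloneR_2$.) Note that this is not a repairable technicality: unweighted $\MINONES(\CloneS_{00})$ is in fact on the tractable side of Theorem~\ref{thm:minones_classification} (as $\CloneS_{00}\subseteq\CloneS_0$), so no reduction that merely re-encodes the 2CNF instance over $\CloneS_{00}$ and then uses weights only as innocuous bookkeeping can possibly work --- the weights must do real logical work.

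The idea you are missing, and the one the paper uses, is that the weight function can \emph{simulate the constant} $\false$: reduce from $\MINONES(B\cup\{\false\})$, which is $\NP$-hard because $\CloneD_2 \subseteq \CloneM_0 = [\CloneS_{00}\cup\{\false\}]$ puts it among the hard cases of Theorem~\ref{thm:minones_classification}, replace $\false$ by a fresh variable $f$ of weight $n+1$ (all original variables getting weight $1$), and set the bound to $\min(n,k)$. Any model within the bound must set $f=0$, so $f$ behaves exactly like the constant, and conversely every light model of $\varphi$ extends by $f=0$. This is precisely the phenomenon the proposition is meant to exhibit --- weights promote the effective clone from $\CloneS_{00}$ to $\CloneM_0$ and thereby turn a tractable fragment intractable --- and it is absent from your argument.
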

  \begin{proof}
Via a reduction from $\MINONES(B \cup \{\false\})$, replacing $\false$ by a fresh variable of big weight.
  \end{proof}

  \begin{theorem}
    Let $B$ be a finite set of Boolean functions. Then $\wEnumSATinc(B)$
    \begin{enumerate}
      \item is $\NP$-hard if $\CloneS_{00}\subseteq [B]$ or $\CloneD_2 \subseteq [B]$ or $\CloneS_{10}\subseteq [B]$,
      \item is in $\DelayP$ otherwise (\ie, $[B] \subseteq \CloneV$ or $[B] \subseteq \CloneL$ or $[B] \subseteq \CloneE$).
    \end{enumerate}
  \end{theorem}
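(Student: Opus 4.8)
The plan is to derive the dichotomy from the results already assembled in this section, structuring the argument along Post's lattice exactly as in Theorem~\ref{thm:minones_classification}. First I would establish the tractable side. By Post's lattice, if none of $\CloneS_{00}$, $\CloneD_2$, $\CloneS_{10}$ is contained in $[B]$, then $[B]$ lies below one of $\CloneV$, $\CloneL$, $\CloneE$ (this is the same lattice-theoretic case split used for $\EnumSATinc$, except that $\CloneS_0$ no longer appears on the tractable side because of Proposition~\ref{prop:wMINONES-S00-NPhard}: every clone $C$ with $C \subseteq \CloneS_0$ but $C \not\subseteq \CloneV \cup \CloneL \cup \CloneE$ contains $\CloneS_{00}$, $\CloneS_{10}$, or $\CloneD_2$). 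For each of the three covering clones, membership in $\DelayP$ is already proved: $[B] \subseteq \CloneV$ or $[B] \subseteq \CloneE$ by the proposition reducing to $\subsetsum$, and $[B] \subseteq \CloneL$ by Proposition~\ref{prop:wEnumSATinc-L-inDelayP}. Since $\wEnumSATinc$ is monotone under $[B_1] \subseteq [B_2]$ only in the sense that tractability for the larger clone gives tractability for every base of a subclone — more precisely, a base $B$ with $[B]$ contained in one of these three clones can be transformed (compactly, using logarithmic nesting depth) into the corresponding normal form, so the algorithm applies — this settles item~2.

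For the hardness side, I would argue that $\wEnumSATinc(B)$ is at least as hard as $\wMINONES(B)$, since any polynomial-delay enumeration of all models by non-decreasing weight outputs a minimum-weight model first (after at most a polynomial delay), hence solves $\wMINONES(B)$ in polynomial time. Thus it suffices to show $\wMINONES(B)$ is $\NP$-hard whenever $\CloneS_{00} \subseteq [B]$, or $\CloneD_2 \subseteq [B]$, or $\CloneS_{10} \subseteq [B]$. The case $\CloneS_{00} \subseteq [B]$ is exactly Proposition~\ref{prop:wMINONES-S00-NPhard}. The cases $\CloneD_2 \subseteq [B]$ and $\CloneS_{10} \subseteq [B]$ follow a fortiori from Proposition~\ref{prop:minones-np-hard}: the unweighted problem $\MINONES(B)$ is already $\NP$-hard there, and $\MINONES(B)$ is the special case of $\wMINONES(B)$ with the all-ones weight function, so $\wMINONES(B)$ inherits the hardness. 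Hence item~1 holds. Finally I would note that the two cases are exhaustive and mutually exclusive by Post's lattice (the three "hard" clones are precisely the minimal clones not lying below $\CloneV$, $\CloneL$, or $\CloneE$), which completes the classification.

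The only genuinely delicate point — and the one I would treat most carefully — is the reduction underlying the transfer from $[B_1] \subseteq [B]$ to an actual algorithm on $B$-formul\ae: one must check that in each of the three tractable cases the input $B$-formula can be rewritten in compact (logarithmic-depth) form so that replacing connectives by their $\CloneV$-, $\CloneL$-, or $\CloneE$-representations does not blow up the formula size, as flagged in the preliminaries. For $\CloneV$ and $\CloneE$ this is immediate since a formula over disjunctions (resp. conjunctions) collapses to a single flat gate; for $\CloneL$ the XOR structure likewise flattens into a single parity constraint plus a constant, so compactness is automatic. Everything else is bookkeeping against Post's lattice and direct appeal to the propositions already proved in this section.
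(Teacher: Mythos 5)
Your proposal is correct and follows essentially the same route the paper takes: membership in $\DelayP$ for $\CloneV$, $\CloneE$ (via $\subsetsum$) and $\CloneL$ (Proposition~\ref{prop:wEnumSATinc-L-inDelayP}), hardness inherited from $\wMINONES$ — using Proposition~\ref{prop:wMINONES-S00-NPhard} for $\CloneS_{00}$ and Proposition~\ref{prop:minones-np-hard} (with unit weights) for $\CloneD_2$ and $\CloneS_{10}$ — and Post's lattice to verify the two cases are exhaustive. The observation that a polynomial-delay enumerator by non-decreasing weight yields a minimum-weight model, and hence hardness transfers from $\wMINONES$, is exactly the paper's stated strategy.
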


\section{Enumeration by Order of Non-increasing Weight}

In this section we consider model enumeration by order of non-increasing weight.

\problemdef
{$\EnumSATdec(B)$}
{a $B$-formula $\varphi$}
{generate all models of $\varphi$ by order of non-increasing weight}
Analogously to Proposition~\ref{prop:EnumSATinc-VELS0-inDelayP}
we obtain $\SpaceDelayP$-membership for disjunctive, conjunctive, affine, or 0-separating formul{\ae}.
	\begin{proposition}\label{prop:EnumSATdecc-VELS0-inDelayP}
		Let $[B] \subseteq \CloneV$ or $[B] \subseteq \CloneE$ or $[B] \subseteq \CloneL$ or $[B] \subseteq \CloneS_0$. Then $\EnumSATdec(B) \in \SpaceDelayP$.
  \end{proposition}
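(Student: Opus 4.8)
The plan is to mirror, case by case, the tractability argument of Proposition~\ref{prop:EnumSATinc-VELS0-inDelayP}, only now reading the weight order top-down instead of bottom-up. For $[B]\subseteq\CloneV$ and $[B]\subseteq\CloneE$ the situation is essentially degenerate: a $\CloneV$-formula is a disjunction of variables and constants, so (up to a forced constant) every assignment is a model except possibly $\vec0$, and a $\CloneE$-formula is a conjunction of variables and constants, so (up to a forced constant) the only model is $\vec1$. In both cases the set of models, if non-empty, is simply the collection of all supersets of some fixed set avoiding some fixed set, and these can be listed by non-increasing Hamming weight with polynomial delay and polynomial space by a straightforward combinatorial enumeration (decreasing $k$ from $n$ down, and for each $k$ running through the $\binom{n-|\text{forced }1\text{'s}|}{\cdot}$ admissible subsets in a standard successor order). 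The case $[B]\subseteq\CloneL$ is also routine: the model set of an affine formula is an affine subspace, whose members can be enumerated by non-increasing weight in polynomial delay using the same Gray-code-style technique already invoked for $\EnumSATinc$, or directly by the priority queue method of Theorem~\ref{thm:delaypmethod} — the point being only that one can decide emptiness and compute the maximum-weight model of an affine system in polynomial time by linear algebra, and that the downward "neighbour" function is easy.

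The interesting case, exactly as before, is $[B]\subseteq\CloneS_0$. Here $\varphi$ has a distinguished variable $x_j$ such that every assignment with $x_j=1$ is a model. I would reuse the steady/unsteady dichotomy verbatim, but now with the roles of "cheap to enumerate" and "expensive to enumerate" arranged so that the plentiful side pays for the scarce side while we move $k$ downward from $n$ to $0$. At weight level $k$ there are $\binom{n-1}{k-1}$ steady models (those with $x_j=1$) and at most $\binom{n-1}{k}$ unsteady models (those with $x_j=0$, only some of which satisfy $\varphi$). Since $\binom{n-1}{k}/\binom{n-1}{k-1}=(n-k)/k$ is polynomially bounded for the relevant range, outputting the steady models of weight $k$ buys enough time to determine (by brute force over the $\binom{n-1}{k}$ candidates) which unsteady assignments of weight $k$ are models; we emit the unsteady ones first (they have the same weight, so any interleaving respects the order) and then the steady ones. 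To get polynomial space rather than exponential, one does not materialise the whole unsteady set for a given $k$ but starts streaming it out while still processing the previous level, exactly the amortisation trick used in Proposition~\ref{prop:EnumSATinc-VELS0-inDelayP}. Because we process $k=n,n-1,\dots,0$, every model of weight $k$ is produced before every model of weight $k'<k$, which is precisely non-increasing weight order.

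One subtlety to flag: the counting ratio $(n-k)/k$ blows up exactly at the bottom, when $k$ is very small (and symmetrically the complementary edge effects). Concretely, for $k=0$ there are no steady models to amortise against, so the single candidate $\vec0$ must be handled separately — but that is a single assignment and can simply be tested at the end. More generally, for $k$ below a constant threshold the number of steady models is still $\binom{n-1}{k-1}=\mathrm{poly}(n)$ while the number of unsteady candidates $\binom{n-1}{k}$ is also only $\mathrm{poly}(n)$, so at those finitely many bottom levels the unsteady models can be computed outright in polynomial time without needing to buy time at all; it is only the middle range where the amortisation argument is actually invoked, and there the ratio is genuinely polynomial. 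Spelling out this edge-case bookkeeping — i.e. checking that the "poly-bounded ratio" claim holds uniformly over the range of $k$ where we rely on it, and disposing of the remaining $O(1)$ levels by direct computation — is the only place where the proof needs a few lines of care; everything else is a transcription of the $\EnumSATinc$ argument with the weight order reversed, so I would state it briefly ("analogously to Proposition~\ref{prop:EnumSATinc-VELS0-inDelayP}") and only expand the $\CloneS_0$ case to the extent of making the reversed amortisation and the edge cases explicit.
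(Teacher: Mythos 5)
Your proposal is correct and follows essentially the same route as the paper, which itself only remarks that the claim is obtained ``analogously to Proposition~\ref{prop:EnumSATinc-VELS0-inDelayP}'': degenerate direct enumeration for $\CloneV$ and $\CloneE$, standard enumeration of the parity/affine model set for $\CloneL$, and the steady/unsteady amortisation on the distinguished variable $x_j$ for $\CloneS_0$, run level by level from $k=n$ down to $0$ with the same polynomial-space streaming trick. Only one phrasing slip to fix: since the unsteady models of level $k$ are discovered \emph{while} the steady ones are being output, they must be emitted after (not before) the steady ones of that level --- which, as your own parenthetical notes, is harmless because they share the same weight; also note that $(n-k)/k\leq n$ for all $k\geq 1$, so apart from the single candidate $\vec{0}$ at $k=0$ no real edge-case analysis is needed.
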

For monotone formul{\ae} in general we apply once more the priority queue method.
	\begin{proposition}\label{prop:EnumSATdec-M-inDelayP}
		Let $[B] \subseteq \CloneM$. Then $\EnumSATdec(B) \in \DelayP$.
  \end{proposition}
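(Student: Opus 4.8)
The plan is to apply the priority queue method (Theorem~\ref{thm:delaypmethod}) to $\EnumSATdec(B)$ for $[B] \subseteq \CloneM$, with the partial order ``non-increasing weight'' extended to a total order by breaking ties lexicographically. I need to verify the three conditions of the method. Condition~(\ref{enum:totpoly}) is immediate: the resulting order is total and comparing two assignments (weights, then lexicographically) is polynomial. For condition~(\ref{enum:minsolpoly}) — which here, since the order is reversed, means computing the \emph{maximum-weight} model — I would observe that a $B$-formula with $[B]\subseteq\CloneM$ is monotone, so $\varphi$ is satisfiable if and only if $\vec{1}$ is a model (equivalently, $\varphi$ evaluates to $1$ under the all-ones assignment), and in that case $\vec{1}$ is precisely the unique maximum-weight model; both the satisfiability test and producing $\min(Sol(\varphi)) = \vec{1}$ are polynomial.

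The heart of the argument is condition~(\ref{enum:f}): exhibiting a successor function $f$. The natural choice exploits monotonicity downward: given a model $m$ (viewed as the set of variables assigned $1$), define
\[
 f(\varphi, m) = \{\, m \setminus \{x\} \mid x \in m \text{ and } m \setminus \{x\} \models \varphi \,\}.
\]
Each element of $f(\varphi,m)$ is checked by a single evaluation of $\varphi$, so $f$ is polynomial-time computable (\ref{enum:fpoly}) and $f(\varphi,m)\subseteq Sol(\varphi)$ by construction (\ref{enum:fsol}). For the covering condition~(\ref{enum:fcov}): if $m$ is a model other than the maximum $\vec 1$, pick any variable $y$ with $m(y)=0$ and set $m' = m \cup \{y\}$; by monotonicity $m'$ is again a model, it has strictly larger weight (hence $m' < m$ in our reversed order), and $m = m' \setminus \{y\} \in f(\varphi, m')$. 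So every non-maximal model is produced from some strictly larger model, and the method yields a polynomial-delay algorithm.

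The one subtlety I expect to flag rather than a genuine obstacle is space: as noted after Theorem~\ref{thm:delaypmethod}, the priority queue may grow exponentially, so this argument gives membership in $\DelayP$ but not in $\SpaceDelayP$ — which is exactly why the statement claims only $\DelayP$, in contrast to Proposition~\ref{prop:EnumSATdecc-VELS0-inDelayP}. A secondary point worth a sentence is fictive variables: removing a fictive variable from a model still gives a model, so $f$ may move ``sideways'' in weight, but since we broke ties lexicographically the order is still total and the covering argument above (which adds a variable, strictly increasing weight) is unaffected; duplicates are in any case removed by the priority queue.
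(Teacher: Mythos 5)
Your proposal is correct and matches the paper's proof exactly: the paper likewise applies the priority queue method of Theorem~\ref{thm:delaypmethod} with $f(\varphi,m) = \{m \setminus \{x\} \mid m \setminus \{x\} \models \varphi\}$, relying on monotonicity for the covering condition. Your write-up simply spells out the verification of the three conditions that the paper leaves implicit.
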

  \begin{proof}
		Apply Theorem~\ref{thm:delaypmethod} with $f(\varphi,m) = \{m \backslash \{x\} \mid m \backslash \{x\} \models \varphi\}$.
  \end{proof}

We now address one of the rare cases where the priority queue method is not applicable and still we obtain tractability. We use for this the following classical result from combinatorics.
	\begin{theorem}[Erd\H{o}s-Ko-Rado Theorem \cite{ErKoRa61}]\label{lem:ErKoRaTh}
		Let $n \geq 2r$ and $A$ be a family of distinct subsets of $\{1,\dots, n\}$ such that each subset is of size $r$ and each pair of subsets intersects. Then it holds
		$$|A| \leq \binom{n-1}{r-1}.$$
  \end{theorem}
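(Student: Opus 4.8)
The plan is to give Katona's cyclic-permutation argument (the ``circle method''), which is self-contained and uses the hypothesis $n\ge 2r$ in exactly one place. Since the statement is classical, one could alternatively just invoke \cite{ErKoRa61}; below is the argument I would write out.

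Fix a circular arrangement of $\{1,\dots,n\}$, i.e.\ a placement of the $n$ elements around a labelled cycle of $n$ positions (there are $n!$ of these). Call an $r$-set $S$ an \emph{arc} of such an arrangement if the elements of $S$ occupy $r$ consecutive positions on the cycle. The heart of the proof is the local claim: \emph{for every circular arrangement, at most $r$ members of $A$ are arcs of it.} To see this, suppose $A_1\in A$ is an arc, occupying positions $p,p+1,\dots,p+r-1$. Any other arc of the arrangement that meets $A_1$ must ``cross'' one of the $r-1$ internal gaps of $A_1$ (the gaps strictly between position $p$ and position $p+r-1$); for each such gap there are exactly two arcs crossing it, one ending just before the gap and one starting just after it, and because $n\ge 2r$ these two arcs are disjoint, so at most one of them lies in $A$. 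As the gap ranges over the $r-1$ internal gaps one enumerates, without repetition, all arcs meeting $A_1$ other than $A_1$ itself; hence at most $1+(r-1)=r$ arcs belong to $A$.

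Next I would double count the set of pairs $(S,\pi)$ where $\pi$ is a circular arrangement and $S\in A$ is an arc of $\pi$. Summing over $\pi$ and applying the local claim bounds this quantity by $r\cdot n!$. On the other hand, every fixed $r$-set $S$ is an arc of exactly $n\cdot r!\cdot(n-r)!$ circular arrangements: choose the (cyclic) block of $r$ positions that $S$ occupies in $n$ ways, order $S$ inside it in $r!$ ways, and order $\{1,\dots,n\}\setminus S$ in the remaining positions in $(n-r)!$ ways. Therefore $|A|\cdot n\cdot r!\,(n-r)!\le r\cdot n!$, which rearranges to $|A|\le \dfrac{r\cdot (n-1)!}{r!\,(n-r)!}=\dbinom{n-1}{r-1}$, as claimed.

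The only non-routine point — and the step I expect to be the main obstacle to write carefully — is the local claim: one must check that for each internal gap of $A_1$ the two crossing arcs are genuinely disjoint (this is precisely where $n\ge 2r$ enters) and that, as the gap varies, these arcs are pairwise distinct and exhaust all arcs meeting $A_1$. The double-counting bookkeeping (the two counts of pairs $(S,\pi)$) is elementary.
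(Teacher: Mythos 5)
Your proof is correct. Note first that the paper does not prove this statement at all: it is the classical Erd\H{o}s--Ko--Rado theorem, imported by citation to \cite{ErKoRa61} and used as a black box (to derive Lemma~\ref{lem:crucial}), so there is no ``paper proof'' to compare against step by step. What you supply is Katona's cyclic-permutation proof, and it is sound: the local claim (at most $r$ members of $A$ are arcs of any fixed circular arrangement) is argued correctly --- every arc meeting $A_1$ other than $A_1$ itself either starts just after or ends just before one of the $r-1$ internal gaps, the two arcs attached to a given gap occupy $2r$ consecutive positions and hence are disjoint precisely because $n \geq 2r$, and no arc is attached to two gaps since that would force $j = i + r > r-1$. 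The double count is also right: each $r$-set is an arc of exactly $n \cdot r!\,(n-r)!$ of the $n!$ arrangements, giving $|A| \le \frac{r \cdot n!}{n\, r!\,(n-r)!} = \binom{n-1}{r-1}$. You correctly flag the one place where the hypothesis $n \ge 2r$ is used. For the purposes of this paper a citation would suffice, but your self-contained argument is a legitimate and standard proof of the stated result.
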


	\begin{lemma}\label{lem:crucial}
		Let $f\in \CloneS_0^2$ be an $n$-ary Boolean function and let $k$ be an integer such that $n/2 \leq k \leq n$.
		Then the number of models of weight $k$ is at least $\binom{n-1}{k-1}$.
  \end{lemma}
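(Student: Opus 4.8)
The plan is to exploit the defining property of $\CloneS_0^2$: a function $f$ is $0$-separating of degree $2$ means that for every pair of non-models $\{a,b\} \subseteq f^{-1}(0)$ there is a coordinate $i$ with $a_i = b_i = 0$; equivalently, no two non-models can have disjoint supports (viewing supports as the sets of $1$-coordinates). So the family $N$ of supports of the weight-$k$ \emph{non}-models is a family of $k$-element subsets of $\{1,\dots,n\}$ in which every two members intersect. Since $k \geq n/2$, that is, $n \geq 2k$, the Erd\H{o}s--Ko--Rado Theorem (Theorem~\ref{lem:ErKoRaTh}) applies with $r = k$ and bounds $|N| \leq \binom{n-1}{k-1}$.

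From there the counting is immediate. There are $\binom{n}{k}$ assignments of weight exactly $k$ in total; each is either a model or a non-model, and the non-models number at most $\binom{n-1}{k-1}$ by the previous paragraph. Hence the number of models of weight $k$ is at least
\begin{equation*}
\binom{n}{k} - \binom{n-1}{k-1} = \binom{n-1}{k},
\end{equation*}
using Pascal's identity. Finally I would note $\binom{n-1}{k} \geq \binom{n-1}{k-1}$ whenever $k \geq n/2$ (the binomial coefficients $\binom{n-1}{j}$ are non-decreasing in $j$ up to the middle, and $k-1 < k \leq n - 1 - (k-1)$ when $2k \leq n$, so $k-1$ is on the left side of the symmetric peak and thus $\binom{n-1}{k-1} \leq \binom{n-1}{k}$). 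Chaining the two inequalities gives the number of weight-$k$ models is at least $\binom{n-1}{k} \geq \binom{n-1}{k-1}$, which is the claimed bound. (One should double-check the boundary case $k = n$: then the only weight-$n$ assignment is $\vec 1$; since $f$ is $0$-separating it is $1$-reproducing only if\ldots actually $\CloneS_0$ functions need not fix $\vec 1$ to $1$, but $\binom{n-1}{k-1} = \binom{n-1}{n-1} = 1$ and $\binom{n-1}{n} = 0$, so the statement still needs $\vec 1$ to be a model; here one uses that $0$-separating of degree $2$ with $n \geq 2$ forces any non-model $\vec 1$ to intersect-fail, so in fact $\vec 1$ must be a model — worth a line.)

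The main obstacle is purely in getting the EKR hypotheses lined up: one must be careful that the relevant family is the family of supports of \emph{non}-models (not models), that the intersection condition is exactly what $0$-separating of degree $2$ gives, and that the size condition $n \geq 2r$ becomes $n \geq 2k$, which is precisely the hypothesis $k \geq n/2$. The only genuinely delicate point is the small/degenerate regime — $k$ close to $n$, or $n < 2$, or $f$ having fictive coordinates — where one should check that either the EKR bound degenerates gracefully or the claim reduces to the trivial observation that $\vec 1$ (and its translates) are models; I would dispatch these with a short remark rather than a separate argument. Everything else is two applications of elementary binomial identities.
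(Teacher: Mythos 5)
Your high-level idea --- bound the weight-$k$ non-models via Erd\H{o}s--Ko--Rado and subtract from $\binom{n}{k}$ --- is the right one, and is in fact the paper's, but the execution contains a genuine error that breaks the proof. Being $0$-separating of degree $2$ says that any two non-models share a coordinate where \emph{both are $0$}; this means their \emph{zero-sets} (the complements of the supports) intersect, \emph{not} that their supports intersect. These are genuinely different conditions: for $t_2^3 \in \CloneS_0^2$ the weight-$1$ non-models have supports $\{1\},\{2\},\{3\}$, which are pairwise disjoint, while their zero-sets pairwise intersect as required. So the family to feed into EKR is the family of zero-sets of the weight-$k$ non-models, which are $r$-sets with $r = n-k$; then the EKR hypothesis $n \geq 2r$ is exactly the hypothesis $k \geq n/2$, and the bound $\binom{n-1}{r-1} = \binom{n-1}{n-k-1} = \binom{n-1}{k}$ on the number of non-models gives at least $\binom{n}{k} - \binom{n-1}{k} = \binom{n-1}{k-1}$ models, which is the claim with no further step needed. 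This is precisely the paper's proof.

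Your version compounds the wrong family with a reversed inequality: you write ``$k \geq n/2$, that is, $n \geq 2k$,'' but $k \geq n/2$ means $n \leq 2k$, so the EKR hypothesis $n \geq 2r$ with $r = k$ is not available (except at $n = 2k$). The same reversal reappears at the end: in the regime $k \geq n/2$ one has $\binom{n-1}{k} \leq \binom{n-1}{k-1}$, since the ratio of the former to the latter is $(n-k)/k \leq 1$, so the final link of your chain, $\binom{n-1}{k} \geq \binom{n-1}{k-1}$, is false exactly where the lemma applies. Your closing worry about $k = n$ is resolved more simply than you suggest: $\CloneS_0^2 \subseteq \CloneR_1$, so every $f \in \CloneS_0^2$ is $1$-reproducing and $\vec{1}$ is always a model, as the paper notes elsewhere.
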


\begin{proposition}\label{prop:EnumSATdec-S02-inDelayP}
	Let $[B] \subseteq \CloneS_0^2$. Then $\EnumSATdec(B) \in \DelayP$.
\end{proposition}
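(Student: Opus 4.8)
The plan is to combine the combinatorial lower bound of Lemma~\ref{lem:crucial} with the ``buy time'' strategy already used in Proposition~\ref{prop:EnumSATinc-VELS0-inDelayP}, but now working from the top weight downwards. Let $\varphi$ be a $B$-formula with $n$ variables and $[B]\subseteq\CloneS_0^2$, so $\varphi$ represents an $n$-ary function $f\in\CloneS_0^2$. The output process runs in two phases governed by the threshold $k=\lceil n/2\rceil$. For weights $k$ with $n\geq k\geq \lceil n/2\rceil$ (the ``upper half''), we enumerate the models weight-class by weight-class, from $n$ down to $\lceil n/2\rceil$; when we are about to output the models of weight $k$, we simultaneously brute-force all $\binom{n}{k-1}$ candidate assignments of weight $k-1$, test each against $\varphi$, and stash the satisfying ones. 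By Lemma~\ref{lem:crucial} there are at least $\binom{n-1}{k-1}$ models of weight $k$, and since $\binom{n}{k-1}=\binom{n-1}{k-1}\cdot\frac{n}{n-k+1}\leq\binom{n-1}{k-1}\cdot 2n$ for $k\geq\lceil n/2\rceil$, the time spent outputting weight-$k$ models dominates (up to a polynomial factor) the time to compute the weight-$(k-1)$ class. Hence the delay within the upper half is polynomial, and when we reach weight $\lceil n/2\rceil$ we have already precomputed and can immediately output weight $\lceil n/2\rceil -1$.

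The second phase handles the ``lower half'': once the threshold is crossed, Lemma~\ref{lem:crucial} no longer gives us a large reservoir of cheap-to-find models. But here we fall back on a different resource. First, I would note that the total number of models of weight strictly below $\lceil n/2\rceil$ is at most $\sum_{j<\lceil n/2\rceil}\binom{n}{j}\le 2^n$, whereas at the moment we finish the upper half we have (by Lemma~\ref{lem:crucial}, summing $\binom{n-1}{k-1}$ over $k$ from $\lceil n/2\rceil$ to $n$) already output at least on the order of $2^{n-1}$ models; so the remaining work is proportional to work already done. Concretely, the cleanest way to realize this is: while still in the upper half, interleave into the output loop a background brute-force over \emph{all} $2^n$ assignments that tests membership in $f^{-1}(1)$ and bucket-sorts the discovered models of weight $<\lceil n/2\rceil$ by weight; the upper-half output, having $\Omega(2^{n-1})$ items each taking $\Omega(1)$ time, pays for this $O(2^n\cdot\mathrm{poly}(n))$ background computation with only a polynomial blow-up in delay. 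When the upper half is exhausted, the entire lower half is already computed and sorted, and we flush it in order of non-increasing weight with constant delay.

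For the polynomial-space refinement (if one wants $\SpaceDelayP$ rather than just $\DelayP$): I would avoid storing the whole lower half by instead, at the point we begin outputting weight-$k$ models in the upper half, running a background brute-force that computes and stores only the models of the single weight class $k-1$ (polynomially many more steps than we can afford? no --- there are $\le\binom{n}{k-1}$ of them, and this is $O(2^n)$ in the worst case near the threshold, so this only works comfortably while $\binom{n}{k-1}\le\mathrm{poly}(n)\cdot\binom{n-1}{k-1}$, i.e.\ throughout the upper half). Near and below the threshold one does pay with exponential space in the naive version; whether the $\SpaceDelayP$ claim survives is exactly the delicate point, and I expect the statement to be proved as membership in $\DelayP$ only, matching the phrasing of Proposition~\ref{prop:EnumSATdec-M-inDelayP}.

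\textbf{Main obstacle.} The crux is the transition at weight $\lceil n/2\rceil$: the Erd\H{o}s--Ko--Rado bound of Lemma~\ref{lem:crucial} only supplies the ``many easy models'' needed to buy time in the \emph{upper} half, so one must argue separately that the cumulative output already produced by the time the threshold is reached is large enough (exponential) to amortize a full $O(2^n\cdot\mathrm{poly}(n))$ sweep that discovers and sorts every remaining model. Getting the bookkeeping right --- interleaving the background sweep with the foreground output so that no single inter-solution gap exceeds a polynomial, and making sure the sorted lower-half buckets are ready precisely when needed --- is the part that requires care; the rest is the same steady/unsteady amortization idea as in Proposition~\ref{prop:EnumSATinc-VELS0-inDelayP}.
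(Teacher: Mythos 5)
Your proposal is correct and follows the paper's proof essentially verbatim for the upper half: descend weight class by weight class from $n$ to $n/2$, using Lemma~\ref{lem:crucial} to show that outputting the $\geq\binom{n-1}{k-1}$ models of weight $k$ amortizes the brute-force over the $\binom{n}{k-1}$ candidates of weight $k-1$. For the lower half the paper uses a slightly slicker device than your global $2^n$ background sweep: whenever an assignment $a$ is tested in the first phase, its complement $\overline{a}$ (which lies in the weight range $0$ to $n/2$) is tested too and pushed onto a stack if it is a model, so the extra cost is a constant factor per test and popping the stack afterwards yields the lower half in non-increasing weight order for free --- but this is the same amortization idea, and your counting argument (the upper half contains $\Omega(2^{n-2})$ models, enough to pay for a full sweep) is a valid substitute.
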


\begin{proof}
In a first step we give a description of the enumeration scheme for the weight range $n$ down to $n/2$.

\noindent
We start with weight $n$: there is one such assignment which is also a model (all functions in $\CloneS_0^2$ are 1-reproducing).
We continue with an inductive argument (for $n/2 \leq k < n$): assume that we know for weight $k$ exactly the set of models $S_k$.
By Lemma~\ref{lem:crucial}, we have $\binom{n-1}{k-1} \leq |S_k| \leq \binom{n}{k}$.
The total time needed to output these models is something polynomial in $|S_k|$. This delivers enough time to bruteforce all assignments of the next weight level $k-1$: There are $\binom{n}{k-1}$ such assignments to be tested, and the factor between $\binom{n}{k-1}$ and $|S_k|$ is obviously polynomially bounded in $n$. Summed up, while outputting (with polynomial delay) the models of weight $k$, we can compute the set of models of weight $k-1$. Repeated application of this allows to enumerate with polynomial delay all models in the weight range $n$ down to $n/2$ by order of non-increasing weight.

The models in the weight range $n/2$ down to $0$ can be computed and stored during the first step: When during the first step an assignment $a$ is tested, also test its complement, $\overline{a}$, which lies then in the weight range $n/2$ down to $0$. If $\overline{a}$ is a model, put it on a stack. After step 1 has finished, output all the assignments from the stack.

\end{proof}

\noindent
We turn to the intractable cases.
Solving $\EnumSATdec$ requires to efficiently solve $\MAXONES^*$, the task of computing a model of maximal weight different from $\vec{1}$.
The hardness of this task will therefore deliver us hardness of $\EnumSATdec$.
The hardness of $\MAXONES^*$ is obtained from $\SAT^*$ and the following problem.

\problemdef
{$\invRootWSAT$}
{a 3CNF-formula $\varphi$ of $n$ variables}
{does $\varphi$ admit a model of weight $\geq n - \sqrt{n}$?}

\begin{lemma}\label{lem:InvRootWeightSAT}
$\invRootWSAT$ is $\NP$-complete. It remains $\NP$-complete if the number of variables is assumed to be a power of $3$. \rem{proof = APPENDIX}
\end{lemma}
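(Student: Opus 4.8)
The plan is to establish membership in $\NP$ and then $\NP$-hardness by reduction from ordinary $3\SAT$. Membership is immediate: a model of weight $\ge n-\sqrt n$ is a polynomial-size certificate that can be checked in polynomial time, so $\invRootWSAT\in\NP$. The interesting direction is hardness, and the core idea is padding: given a $3\mathrm{CNF}$ formula $\psi$ on $m$ variables, I want to produce an equivalent formula $\varphi$ on $n$ variables that is satisfiable if and only if $\varphi$ has a model of weight $\ge n-\sqrt n$. To achieve this I would add a block of $d$ fresh ``free'' variables $z_1,\dots,z_d$ which do not occur in any clause of $\psi$ (so they can always be set to $1$), together possibly with a few trivial clauses to keep the formula syntactically $3\mathrm{CNF}$. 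Then $n=m+d$, and any model of $\varphi$ can be assumed to assign all $z_i=1$, contributing weight $d$; the at most $m$ remaining variables can drag the weight down by at most $m$. So a model of weight $\ge n-\sqrt n$ exists iff $\psi$ is satisfiable, provided $d$ is chosen so that $m\le \sqrt n=\sqrt{m+d}$, i.e. $d\ge m^2-m$. Choosing $d=m^2-m$ (so $n=m^2$) makes $\sqrt n=m$ exactly, and every model of $\psi$, extended by $z_i=1$, has weight $\ge d = n-m = n-\sqrt n$, while conversely a model of $\varphi$ of weight $\ge n-\sqrt n=n-m$ must satisfy $\psi$ since only the $m$ original variables can be $0$. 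This gives a polynomial-time (indeed logspace) many-one reduction, and notice it already yields $n$ being a perfect square.

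For the strengthening to $n$ a power of $3$, I would iterate the padding more carefully. Starting from $\psi$ on $m$ variables, first pad up to some convenient size and then choose the number of free variables so that the final count $n$ is the least power of $3$ that is large enough. Concretely, pick the smallest integer $t$ with $3^t\ge m^2$ and set $n=3^t$, adding $d=n-m$ free variables $z_1,\dots,z_d$, all forced to $1$ as above. Since $3^t\ge m^2$ we get $\sqrt n=3^{t/2}\ge m$, so again at most $m$ of the $n$ variables can be $0$ in a model, and the ``down-weight'' budget $\sqrt n$ comfortably absorbs the original variables: $\psi$ satisfiable $\iff$ $\varphi$ has a model of weight $\ge n-\sqrt n$. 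Because $t=O(\log m)$, we have $n=3^t<3m^2$, so the reduction is still polynomial. One must check that $3\mathrm{CNF}$ form is preserved when introducing the free variables — e.g. add for each $z_i$ a padding clause such as $(z_i\lor z_i\lor z_i)$ or simply leave them clause-free if the problem definition tolerates variables of $\varphi$ that appear only in $\mathrm{Vars}(\varphi)$; either way this is routine.

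The main obstacle, and the place where care is needed, is the interaction between the ``$\sqrt n$'' slack and the requirement that $n$ be an exact power of $3$: one cannot freely choose $n$, so the padding block size $d$ is forced, and one must verify that the forced $n$ still satisfies $\sqrt n\ge m$ (so that all original variables fit inside the allowed weight deficit) while remaining polynomially bounded in $m$. The inequality $m\le\sqrt n$ is exactly what guarantees completeness of the reduction (a satisfying assignment of $\psi$ lifts to a model of $\varphi$ of weight at least $n-m\ge n-\sqrt n$) and soundness (a heavy model of $\varphi$ has at most $\sqrt n\le$ enough room only among the original variables... actually it has at most $\sqrt n$ zeros total, hence in particular it could zero out original variables, but every clause of $\psi$ is still present, so it satisfies $\psi$). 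Choosing $n$ as the least power of $3$ above $m^2$ resolves this cleanly, and I expect the rest to be a short verification.
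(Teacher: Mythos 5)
Your proof is correct, but it takes a genuinely different and simpler route than the paper's. You reduce from plain 3CNF-satisfiability and pad with $d=m^2-m$ (resp.\ $d=3^t-m$) fresh variables that occur in no constraining clause, so that $m\le\sqrt n$; then any satisfying assignment lifts to one of weight $\ge n-m\ge n-\sqrt n$, and conversely any model of the padded formula restricts to a model of the original one, so the weight threshold is vacuous on your instances and hardness is inherited purely from satisfiability. The paper instead reduces from $\MINONES$(3CNF) (after flipping literals to pass to the complementary problem $\RootWSAT$): there the given bound $k$ must be aligned exactly with $\sqrt{n'}$ of the padded instance, which forces a two-case analysis (padding with variables forced to $0$ when $k\ge\sqrt n$ and forced to $1$ when $k<\sqrt n$) and some floor/square-root arithmetic. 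Your argument buys brevity and avoids that arithmetic entirely; the paper's buys instances on which the weight bound is tight against the optimum, but that extra structure is needed neither for the lemma as stated nor for its use in Proposition~\ref{prop:maxonesstar-np-hard}, which works for arbitrary $\invRootWSAT$ instances with a power-of-$3$ number of variables. The only points to tidy are the definitional one you already flag (making the padding variables genuine variables of a 3CNF-formula, e.g.\ via clauses $(z_i\lor z_i\lor z_i)$, which moreover force them to $1$ and only help the argument) and the garbled sentence at the end of your soundness discussion, whose conclusion --- every model of the padded formula satisfies all original clauses --- is nevertheless the right one.
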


\begin{proposition}\label{prop:maxonesstar-np-hard}
	Let $\CloneS_{12} \subseteq [B]$ or $\CloneD_1 \subseteq [B]$. Then $\MAXONES^*(B)$ is $\NP$-hard.
\end{proposition}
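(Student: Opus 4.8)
The plan is to treat the two hypotheses separately but in both cases reduce from the $\NP$-complete problems already at hand, namely $\SAT^*$ (hard for $\CloneS_{12} \subseteq [B]$ by Proposition~\ref{prop:SATstarNPhard}) and $\invRootWSAT$ (hard by Lemma~\ref{lem:InvRootWeightSAT}). The guiding observation is that $\MAXONES^*$ asks for a model of maximal weight \emph{strictly below} $\vec{1}$, so the decision version ``is there a model of weight $\geq t$ with $m \neq \vec{1}$?'' is what we actually make hard; membership-in-$\NP$ is trivial, so only hardness needs work.

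First, the case $\CloneS_{12} \subseteq [B]$. Here I would reduce from $\SAT^*(B)$. Given a $B$-formula $\varphi$ on $n$ variables, any non-trivial model has weight between $1$ and $n-1$; conversely a model of weight $\geq 1$ with $m \neq \vec{1}$ is exactly a non-trivial model (the weight-$0$ assignment $\vec{0}$ is automatically excluded once we demand weight $\geq 1$). So it already suffices to ask the $\MAXONES^*$ oracle whether $\varphi$ has a model of weight $\geq 1$ different from $\vec{1}$: $\varphi \in \SAT^*(B)$ iff the answer is yes. The only subtlety is that we must present $\varphi$ itself as a $B$-formula of logarithmic nesting depth if we started from some other representation; since $\SAT^*(B)$ is stated directly for $B$-formul{\ae} this is not an issue, and no blow-up occurs. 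Thus $\MAXONES^*(B)$ is $\NP$-hard whenever $\CloneS_{12} \subseteq [B]$.

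Second, the case $\CloneD_1 \subseteq [B]$. This is the harder direction, since $\CloneD_1$-formul{\ae} are self-dual and $1$-reproducing in a way that forbids the cheap trick above: a self-dual formula has $\vec{1}$ as a model and has exactly half of all $2^n$ assignments as models, so merely asking for ``some model $\neq \vec{1}$'' is trivially true and useless. Instead I would reduce from $\invRootWSAT$, exploiting Lemma~\ref{lem:InvRootWeightSAT} in the form where the variable count $n$ is a power of $3$. Starting from a 3CNF formula $\psi$ on $n$ variables, I would first turn it into a $B$-formula using the canonical connective-replacement transformation; the point of requiring $n$ a power of $3$ (hence allowing a balanced, logarithmic-depth rewriting) is precisely that this yields a \emph{compact} formula, so the transformation is polynomial. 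The remaining task is to encode the weight threshold $n - \sqrt{n}$ into a $\MAXONES^*$ query while coping with self-duality: add a few fresh ``padding'' variables and a self-dual gadget (available since $[d_1] = \CloneD_1 \subseteq [B]$, and $[\CloneD_1 \cup \{\true\}] = \CloneR_1$, so constants can be simulated on demand inside a controlled sub-block) so that the models of the new formula $\varphi'$ of weight just below $\vec{1}$ are in bijection with the high-weight models of $\psi$, with the $\vec{1}$ model of $\varphi'$ removed from consideration by the ``$\neq \vec{1}$'' clause. Then $\psi \in \invRootWSAT$ iff $\varphi'$ has a model $\neq \vec{1}$ of weight $\geq$ the shifted threshold, which a $\MAXONES^*$ oracle decides.

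I expect the main obstacle to be exactly the self-dual gadget construction in the $\CloneD_1$ case: one must simultaneously (i) keep the formula self-dual (or at least a legal $\CloneD_1$/$[B]$-formula) while (ii) breaking the forced symmetry between a model and its complement so that a \emph{weight} constraint becomes meaningful, and (iii) keep the nesting depth logarithmic so the connective-replacement does not blow up — which is why the ``power of $3$'' refinement of Lemma~\ref{lem:InvRootWeightSAT} is invoked. The $\CloneS_{12}$ case, by contrast, is essentially immediate from Proposition~\ref{prop:SATstarNPhard}.
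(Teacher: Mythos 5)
Your first case is exactly the paper's argument: the map $\varphi \mapsto (\varphi,1)$ reduces $\SAT^*(B)$ to $\MAXONES^*(B)$, since a model of weight $\geq 1$ different from $\vec{1}$ is precisely a non-trivial model, and Proposition~\ref{prop:SATstarNPhard} finishes it. Nothing to add there.

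The second case, however, has a genuine gap: you correctly choose $\invRootWSAT$ (with $n$ a power of $3$) as the source problem and correctly list the obstacles, but the ``self-dual gadget'' that is supposed to overcome them is never constructed, and that gadget is the entire content of this case. The paper's reduction rests on two concrete substitutions that your sketch does not contain. First, to express an arbitrary formula over $\CloneD_1$-connectives one routes through $\CloneBF = [\CloneR_1 \cup \{\false\}]$ and simulates $\false$ by a fresh variable $f$ conjoined with $\bigwedge_{i=1}^{n}(f \to x_i)$: the only model with $f=1$ is then $\vec{1}$, which is exactly the assignment that $\MAXONES^*$ excludes, so every relevant model has $f=0$ and the simulation is faithful. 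Second, since $\CloneR_1 = [\CloneD_1 \cup \{\true\}]$ one must still simulate $\true$, and this is done by substituting the complete threshold tree $^d\psi_2^3(x_1,\dots,x_{3^d})$ over the \emph{original} variables; by Lemma~\ref{lem:minones-threshold} it evaluates to $1$ on every assignment of weight greater than $3^d - 2^d$, and since the threshold of interest is $n - \sqrt{n} = 3^d - 3^{d/2} > 3^d - 2^d$, this simulation of $\true$ is correct on exactly the high-weight assignments the reduction cares about. This also shows that your stated reason for the power-of-$3$ refinement is off the mark: it is not about compactness (a multi-ary conjunction can always be balanced regardless of $n$), but about making the ternary tree $^d\psi_2^3$ have exactly $n = 3^d$ leaves so it can be fed the variables of $\varphi$, with $\sqrt{n} = 3^{d/2} \leq 2^d$ guaranteeing soundness. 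Without these two substitutions (or some equivalent) the $\CloneD_1$ case remains unproved in your proposal.
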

\begin{proof}
	In the first case we reduce from $\SAT^*(B)$ via $\varphi \mapsto (\varphi,1)$ and conclude with Proposition~\ref{prop:SATstarNPhard}. In the second case we have a technically involving reduction from $\invRootWSAT$.
\end{proof}

  \begin{theorem}
    Let $B$ be a finite set of Boolean functions. Then $\EnumSATdec(B)$
    \begin{enumerate}
      \item is $\NP$-hard if $\CloneS_{12}\subseteq [B]$ or $\CloneD_1 \subseteq [B]$,
      \item is in $\DelayP$ otherwise (\ie, $[B] \subseteq \CloneS_0^2$ or $[B] \subseteq \CloneM$ or $[B] \subseteq \CloneL$), where $\EnumSATdec(X) \in \SpaceDelayP$ for $X \in \{\CloneV,\CloneE, \CloneL, \CloneS_0\}$
    \end{enumerate}
  \end{theorem}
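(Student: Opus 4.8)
The plan is to handle the two items separately, leaning on the propositions and the priority-queue machinery already established, and to glue them together with a routine computation in Post's lattice. For item~1 I would show that a polynomial-delay algorithm for $\EnumSATdec(B)$ would solve $\MAXONES^*(B)$ in polynomial time, which by Proposition~\ref{prop:maxonesstar-np-hard} is impossible unless $\P=\NP$ when $\CloneS_{12}\subseteq[B]$ or $\CloneD_1\subseteq[B]$. Indeed, running the enumeration on a $B$-formula $\varphi$, the assignment $\vec{1}$ --- whenever it is a model --- has the unique maximal weight $n$ and is therefore output first; so it suffices to inspect the first output and, if it equals $\vec{1}$, the second one. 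After at most two delays, that is, in time polynomial in $|\varphi|$, we have produced a model of maximal weight distinct from $\vec{1}$, or established that none exists. Hence $\EnumSATdec(B)$ is $\NP$-hard in those cases.

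For item~2 the first step is the Post-lattice observation that if $\CloneS_{12}\not\subseteq[B]$ and $\CloneD_1\not\subseteq[B]$, then $[B]\subseteq\CloneS_0^2$ or $[B]\subseteq\CloneM$ or $[B]\subseteq\CloneL$. One verifies this by checking that every clone contained in none of $\CloneS_0^2$, $\CloneM$, $\CloneL$ must contain $\CloneS_{12}$ or $\CloneD_1$: using $\CloneS_{12}=\CloneS_1\cap\CloneR_2$ together with $\CloneR_2\subseteq\CloneR_0$ and $\CloneR_2\subseteq\CloneR_1$, the clones $\CloneBF,\CloneR_0,\CloneR_1,\CloneR_2$ and the whole $\CloneS_1$-family (that is, $\CloneS_1$, its degree-$k$ variants, $\CloneS_{11}$, $\CloneS_{11}^k$ and $\CloneS_{12}^k$) all sit above $\CloneS_{12}$, while $\CloneD$ sits above $\CloneD_1=\CloneD\cap\CloneR_2$; everything else --- the affine and monotone sublattices, the disjunctive and conjunctive clones $\CloneV$ and $\CloneE$, the self-dual monotone clone $\CloneD_2$, and the whole $\CloneS_0$-side, namely $\CloneS_0$, its degree-$k$ refinements and their intersections with $\CloneR_1$, $\CloneR_2$ and $\CloneM$ --- is already contained in $\CloneM$, $\CloneL$ or $\CloneS_0^2$. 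Once the case is fixed, apply Proposition~\ref{prop:EnumSATdec-S02-inDelayP} if $[B]\subseteq\CloneS_0^2$, Proposition~\ref{prop:EnumSATdec-M-inDelayP} if $[B]\subseteq\CloneM$, and Proposition~\ref{prop:EnumSATdecc-VELS0-inDelayP} if $[B]\subseteq\CloneL$; each yields $\EnumSATdec(B)\in\DelayP$. The refined claim $\EnumSATdec(X)\in\SpaceDelayP$ for $X\in\{\CloneV,\CloneE,\CloneL,\CloneS_0\}$ is exactly Proposition~\ref{prop:EnumSATdecc-VELS0-inDelayP}.

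The one non-mechanical point, and the place where care is needed, is the Post-lattice bookkeeping: one must make sure that no clone slips between the cracks --- in particular that all the intermediate $0$-separating clones land either above $\CloneS_{12}$ or $\CloneD_1$ or else below $\CloneS_0^2$ --- and, so that the two items are genuinely disjoint (modulo $\P=\NP$), that $\CloneS_{12}$ and $\CloneD_1$ are themselves contained in none of $\CloneS_0^2$, $\CloneM$, $\CloneL$ (for instance the base function $d_1$ of $\CloneD_1$ is neither monotone nor affine nor $0$-separating of degree $2$). Everything else is a direct appeal to the results already proved.
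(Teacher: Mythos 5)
Your overall assembly is exactly the paper's: hardness is inherited from $\MAXONES^*(B)$ (Proposition~\ref{prop:maxonesstar-np-hard}) by observing that the first one or two outputs of a non-increasing-weight enumerator yield a maximum-weight model different from $\vec{1}$, and membership is read off Propositions~\ref{prop:EnumSATdecc-VELS0-inDelayP}, \ref{prop:EnumSATdec-M-inDelayP} and \ref{prop:EnumSATdec-S02-inDelayP}; the paper gives no further argument, so the content of your write-up is the Post-lattice case distinction, which is indeed the only point requiring care.

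Precisely there you make a concrete mistake: $\CloneS_{11}$ and $\CloneS_{11}^k$ do \emph{not} sit above $\CloneS_{12}$. One has $\CloneS_{11}=\CloneS_1\cap\CloneM$ and $\CloneS_{12}=\CloneS_1\cap\CloneR_2$; these are incomparable siblings below $\CloneS_1$ whose intersection is $\CloneS_{10}$, and a witness is $x\land(y\to z)\in\CloneS_{12}\setminus\CloneS_{11}$ (it is not monotonic). Had your claim been true, the theorem itself would be inconsistent, since $\CloneS_{11}\subseteq\CloneM$ would place these clones simultaneously in the $\NP$-hard and the $\DelayP$ cases --- the very disjointness issue you raise at the end but check only for $\CloneS_{12}$ and $\CloneD_1$ themselves. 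The fix is harmless: $\CloneS_{11}$, $\CloneS_{11}^k$ (and likewise $\CloneS_{10}$, $\CloneS_{10}^k$) belong to the ``everything else'' bucket via containment in $\CloneM$, and the clones lying above $\CloneS_{12}$ or $\CloneD_1$ are exactly $\CloneS_{12}$, $\CloneS_{12}^k$, $\CloneS_1$, $\CloneS_1^k$, $\CloneD_1$, $\CloneD$, $\CloneR_2$, $\CloneR_0$, $\CloneR_1$ and $\CloneBF$. With that correction the dichotomy, and hence the proof, goes through.
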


\noindent
Lastly, a look at the weighted variant, where we obtain only partial results.

\begin{proposition}\label{prop:wEnumSATdec-S0-inDelayP}
	Let $[B] \subseteq \CloneS_0$ or $[B] \subseteq \CloneM$. Then $\wEnumSATdec(B) \in \DelayP$.
\end{proposition}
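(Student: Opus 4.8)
The plan is to handle the two clone families separately and, in each case, invoke the priority queue method (Theorem~\ref{thm:delaypmethod}) with the partial order on solutions given by non-increasing weight extended to a total order, e.g. by breaking ties lexicographically. For both cases the hypotheses (1) and (3a)--(3b) of Theorem~\ref{thm:delaypmethod} are routine: the total order is clearly polynomial-time computable, and the candidate successor function $f$ produces in polynomial time a polynomial-size set of strings each of which we can check for membership in $Sol(\varphi,w)$ in polynomial time (satisfiability of a $B$-formula is decidable in polynomial time throughout the tractable region, since $[B]\subseteq\CloneS_0$ or $[B]\subseteq\CloneM$ implies $\CloneS_{12}\not\subseteq[B]$, so $\SAT(B)$ is easy by Proposition~\ref{prop:SATstarNPhard}). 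The two genuine obligations are hypothesis (2) --- computing a maximum-weight model --- and hypothesis (3c), that every non-maximal model is "reachable" by one downward step of $f$ from some strictly heavier model.

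For the monotone case $[B]\subseteq\CloneM$, I would reuse the unweighted successor function from Proposition~\ref{prop:EnumSATdec-M-inDelayP}, namely $f(\varphi,m) = \{\, m\setminus\{x\} \mid m\setminus\{x\}\models\varphi \,\}$, which already only goes downward in weight (and hence, with any weight function $w$, downward or equal in $w$-weight; extending by lexicographic tie-break keeps the order total and consistent). Reachability (3c): if $m$ is a model other than the top model, then by monotonicity $m$ is not $\vec 1$, so some variable $x\notin m$; adding it back yields a model $m\cup\{x\}\supsetneq m$ which is strictly heavier in the $w$-ordering (strictly heavier in weight if $w(x)>0$, otherwise equal weight but lexicographically larger), and $m\in f(\varphi,m\cup\{x\})$. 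For hypothesis (2), the maximum-$w$-weight model under $[B]\subseteq\CloneM$ is simply $\vec 1$ whenever $\varphi$ is satisfiable at all, because monotone formulas are closed upward; so it is trivially computable. Note this gives $\wEnumSATdec(B)\in\DelayP$ but in exponential space, matching the statement.

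For the $0$-separating case $[B]\subseteq\CloneS_0$, the key structural fact is the one already used in Proposition~\ref{prop:EnumSATinc-VELS0-inDelayP}: there is a distinguished variable $x_j$ such that every assignment with $x_j=1$ satisfies $\varphi$. For hypothesis (2), a maximum-$w$-weight model can be found greedily: among models with $x_j=1$ the best is $\vec 1$ (weight $w(\Vars{\varphi})$); a model with $x_j=0$ of maximal weight is found by taking all of $\{x\mid x\ne x_j\}$ and, as long as the current set is not a model, iteratively removing a variable --- but to do this correctly in polynomial time we exploit that once $x_j=0$ is fixed the induced formula $\varphi[x_j/\false]$ is again a $B$-formula with $\CloneS_{12}\not\subseteq[B]$, hence its satisfiability and even a maximum-weight model question can be attacked recursively on the remaining distinguished variable; since each recursion step deletes at least one variable this terminates in polynomially many rounds, and we return the heavier of the two candidates. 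For the successor function I would use $f(\varphi,w,m) = \{\, m\setminus\{x\} \mid x\in m,\ m\setminus\{x\}\models\varphi \,\}$, exactly as in the monotone case; every downward single-variable deletion that stays a model is a legal successor. Reachability (3c): given a model $m$ that is not the $w$-maximum, I need a model $m'\supsetneq m$ (by one element) with $m\in f(\varphi,w,m')$ and $m' > m$ in the order. If $x_j\notin m$, then $m\cup\{x_j\}$ is a model (it has $x_j=1$), it properly contains $m$, and it is $>m$ in the $w$-order; done. If $x_j\in m$, then all models $\supseteq m$ that we could reach have $x_j=1$, and among those the extensions of $m$ by single variables are again models; since $m\ne\vec 1$ in this sub-case (else $m$ would be the $w$-maximum), some $x\notin m$ exists and $m\cup\{x\}$ works.

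The step I expect to be the main obstacle is hypothesis (2) in the $0$-separating case when the weights are not polynomially bounded: one must be careful that the greedy/recursive construction of a maximum-$w$-weight model with $x_j=0$ is genuinely polynomial and correct, because deleting a variable to restore satisfiability is not canonical and a naive search could branch. The resolution is the observation above --- fixing the distinguished variable to $0$ leaves a formula in the same tractable clone region, so we recurse on its own distinguished variable, and the chain of distinguished variables has length at most $n$; correctness follows because in each round the only obstruction to the current set being a model is resolved by moving to the next level, and we never need to reconsider a variable already placed at $1$. Everything else is a direct and routine verification of the hypotheses of Theorem~\ref{thm:delaypmethod}.
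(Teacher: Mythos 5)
Your proposal is essentially the paper's proof: the paper simply applies Theorem~\ref{thm:delaypmethod} with the successor function $f(\varphi,m)=\{m\setminus\{x\}\mid m\setminus\{x\}\models\varphi\}$ in both cases, and your verification of hypotheses (1), (3a)--(3c) --- including the case split on whether the distinguished variable $x_j$ belongs to $m$ --- is a correct filling-in of the details the paper leaves implicit. The one place where you go astray is the part you yourself single out as ``the main obstacle,'' namely hypothesis (2) for $[B]\subseteq\CloneS_0$. First, the whole search for a maximum-weight model with $x_j=0$ is unnecessary: every $0$-separating function is $1$-reproducing (apply the separation condition to $A=f^{-1}(0)$), so $\vec{1}$ is always a model of an $\CloneS_0$-formula, and since the weights are non-negative integers $\vec{1}$ is automatically a model of globally maximal weight; $\min(Sol)$ under the non-increasing order is therefore just $\vec{1}$, computable trivially. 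Second, the recursion you propose to carry out that search is unsound: $\varphi[x_j/\false]$ is a $B\cup\{\false\}$-formula, not a $B$-formula, and $[\CloneS_0\cup\{\false\}]=\CloneBF$ (from $x\to y$ and $\false$ one gets $\neg x$ and hence all of $\CloneBF$), so fixing the distinguished variable to $0$ does \emph{not} stay in the tractable clone region, and finding a maximum-weight model of the residual formula is in general $\NP$-hard. Fortunately this flawed subroutine is never needed, so the proof as a whole stands once that digression is deleted.
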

\begin{proof}
		Apply Theorem~\ref{thm:delaypmethod} with $f(\varphi,m) = \{m \backslash \{x\} \mid m \backslash \{x\} \models \varphi\}$.
\end{proof}

	\begin{proposition}\label{prop:wEnumSATdec-L-inDelayP}
		Let $[B] \subseteq \CloneL$. Then $\wEnumSATdec(B) \in \DelayP$.
  \end{proposition}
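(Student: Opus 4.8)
The plan is to reduce $\wEnumSATdec(B)$ for $[B] \subseteq \CloneL$ to the weighted enumeration of solutions of a linear system over $\GF(2)$ in order of non-increasing weight, and then to apply the priority queue method (Theorem~\ref{thm:delaypmethod}). First I would observe that a compact $B$-formula $\varphi$ with $[B] \subseteq \CloneL$ is equivalent to a single affine equation $x_{i_1} \xor \cdots \xor x_{i_s} \xor c = 1$ over $\Vars{\varphi}$, where fictive variables simply do not occur in the equation; more precisely, the set of models of $\varphi$ is an affine subspace $\mathcal{A}$ of $\{0,1\}^n$, and a basis together with an offset can be computed in polynomial time by Gaussian elimination. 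So the instance reduces to: given an affine subspace $\mathcal{A} \subseteq \{0,1\}^n$ (by generators) and a weight function $w$, enumerate the points of $\mathcal{A}$ in order of non-increasing weight $w(\cdot)$.

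To feed this into Theorem~\ref{thm:delaypmethod}, I need the three ingredients. Condition~\ref{enum:totpoly}: extend the partial order "non-increasing weight" to a total order by breaking ties with, say, reverse-lexicographic order on assignments; this is polynomial-time computable. Condition~\ref{enum:minsolpoly}: the \emph{maximum}-weight element of $\mathcal{A}$ (with respect to the total order) plays the role of $\min(Sol(x))$ here since we enumerate downward — computing a maximum-weight point of an affine subspace over $\GF(2)$ is itself a polynomial-time task (it is essentially $\wMAXONES$ restricted to affine constraints, solvable because affine constraints over $\GF(2)$ with a linear objective can be handled: either the affine space is a single point, or one can greedily set high-weight free coordinates; the known tractability of $\MAXONES$/$\wMAXONES$ for affine languages from the constraint literature applies, or one argues directly). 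Non-emptiness is just solvability of a linear system. Condition~\ref{enum:f}: here I need a "successor-generating" function $f$ such that every model $m$ other than the top one is generated from some strictly heavier model. The natural candidate, mirroring Proposition~\ref{prop:wEnumSATinc-L-inDelayP}, is a \emph{local move} set: from a model $m \in \mathcal{A}$, generate all models obtainable by flipping one coordinate (necessarily a fictive/free variable, to stay in $\mathcal{A}$ while changing weight by $\pm w(x)$) or by flipping a minimal pair of non-fictive coordinates that jointly keep us in $\mathcal{A}$ — i.e. $f(\varphi,m) = \{m \triangle \{x\} \mid x \text{ fictive}\} \cup \{m \triangle \{x,y\} \mid x,y \text{ not fictive}, \{x,y\} \text{ a dependency in } \mathcal{A}\}$, where $\triangle$ is symmetric difference. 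This is polynomially computable and lands in $\mathcal{A}$.

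The main obstacle, and the place where real care is needed, is verifying the covering condition~\ref{enum:fcov}: for every model $m$ that is not the top element, there must exist a model $z$ with $z > m$ (strictly heavier, or equal weight but later in the tie-break) such that $m \in f(\varphi,z)$. Equivalently, from any non-maximal point of the affine space one can always reach a \emph{heavier-or-later} point by a single such local move — so that iterating these moves from $m$ eventually reaches the top, meaning $m$ is reachable from the top by reverse moves. The delicate point is that pure weight may plateau: there may be models of equal weight, and one must argue that the lexicographic (or reverse-lex) tie-breaker is consistent with the move structure, i.e. that among equal-weight models the local moves still connect everything monotonically in the tie-break order. I would handle this by choosing the move set and tie-break order compatibly: work in coordinates given by a reduced row-echelon basis of $\mathcal{A}$, so that a point of $\mathcal{A}$ is determined by its values on the free coordinates; a single free-coordinate flip changes exactly one free coordinate (plus possibly some dependent ones), and one shows that the set of achievable weights/tie-break-values has no "local maximum" other than the global one under these moves. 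This is the analogue of the easy direction behind Proposition~\ref{prop:wEnumSATinc-L-inDelayP} but run in the decreasing direction, and once the echelon-coordinate description is in place the verification is a short finite-dimensional linear-algebra argument rather than a heavy computation. Having checked all hypotheses, Theorem~\ref{thm:delaypmethod} yields $\wEnumSATdec(B) \in \DelayP$.
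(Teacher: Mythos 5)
Your proposal is correct and follows essentially the same route as the paper: the paper's proof is precisely ``analogously to Proposition~\ref{prop:wEnumSATinc-L-inDelayP}'', i.e.\ apply the priority queue method of Theorem~\ref{thm:delaypmethod} with local moves that flip a single fictive variable or a pair of non-fictive variables, the order being extended to a total one by a lexicographic tie-break. Your additional machinery (Gaussian elimination, echelon bases) is more than is needed, since an $\CloneL$-formula is a single parity equation, but your explicit attention to the covering condition~\ref{enum:fcov} on weight plateaus and your use of symmetric-difference moves (which include swaps of non-fictive variables) is a sound and arguably more careful instantiation of the same idea.
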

  \begin{proof}
  Analogously to Proposition~\ref{prop:wEnumSATinc-L-inDelayP}.
  \end{proof}
The following tractability indicates that also $\wEnumSATdec(\CloneS_0^2)$ might be tractable. However, none of the above algorithmic strategies seems to work out.
\begin{proposition}\label{prop:wMAXONESstar-S02-inP}
	Let $[B] \subseteq \CloneS_0^2$. Then $\wMAXONES^*(B) \in \P$.
\end{proposition}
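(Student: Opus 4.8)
The plan is to show that, when $[B]\subseteq\CloneS_0^2$, an optimal solution of $\wMAXONES^*$ can always be chosen among the $n:=|\Vars{\varphi}|$ assignments obtained from $\vec{1}$ by setting a single variable to $0$; since each of these is tested by a single evaluation of $\varphi$, this yields a polynomial-time algorithm. Throughout I identify an assignment with its set of $1$-variables. Let $f\in[B]\subseteq\CloneS_0^2$ be the Boolean function computed by $\varphi$, let $w$ be the weight function and $W:=\sum_{x\in\Vars{\varphi}}w(x)$ its total, and for each $x_i\in\Vars{\varphi}$ set $S_i:=\Vars{\varphi}\setminus\{x_i\}$, an assignment of weight $W-w(x_i)$ that differs from $\vec{1}$.

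The crux is a single structural observation: \emph{at most one} of $S_1,\dots,S_n$ is a non-model of $\varphi$. Indeed, since $f$ is $0$-separating of degree $2$, any two distinct non-models of $f$ share some coordinate on which both take value $0$; but the only such coordinate of $S_i$ is $i$, so $S_i$ and $S_j$ with $i\neq j$ cannot both be non-models. (Recall that every function in $\CloneS_0^2$ is $1$-reproducing, so $\vec{1}$ is a model and the problem is well posed; moreover, for $n\ge 2$ at least one $S_i$ is a model, so a model $\neq\vec{1}$ always exists, the sole exception being the degenerate arity-$1$ projection, which is recognised directly.) Optimality of a weight-$(n-1)$ solution then follows at once: let $S^*\neq\vec{1}$ be a model of maximum weight. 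If $|S^*|=n-1$ then $S^*=S_i$ for some $i$; otherwise $\Vars{\varphi}\setminus S^*$ contains at least two variables, so by the observation there is some $x_i\notin S^*$ with $S_i$ a model, and since $S_i\supseteq S^*$ and the weights are non-negative, $w(S_i)\ge w(S^*)$, so $S_i$ is itself an optimal model different from $\vec{1}$.

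The algorithm is therefore: evaluate $\varphi$ on each of $S_1,\dots,S_n$ and output one that satisfies $\varphi$ and has $w(x_i)$ minimal (reporting that no model $\neq\vec{1}$ exists only in the degenerate case above). This runs in polynomial time. I do not expect a genuine obstacle here — the whole argument rests on the two-line degree-$2$ separation fact, and the remaining care is purely routine bookkeeping (fictive variables, for which $S_i$ is trivially a model in accordance with the bound, and tiny arities). It is worth emphasising the contrast with the enumeration problem $\wEnumSATdec(\CloneS_0^2)$: there one is forced to "buy time" from the combinatorial lower bound of Lemma~\ref{lem:crucial} and Theorem~\ref{lem:ErKoRaTh}, whereas here no counting at all is needed, because the optimum already lives at Hamming weight $n-1$.
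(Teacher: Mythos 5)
Your proof is correct and follows essentially the same route as the paper's: restrict the search to the $n$ assignments with exactly one variable set to $0$, and use $0$-separation of degree $2$ to guarantee both that models exist among them and that a maximum-weight model different from $\vec{1}$ is among them. You merely make explicit (via the observation that at most one such assignment can be a non-model, so any model with two or more zeros is weight-dominated by some satisfying $S_i$ above it) what the paper dismisses with ``obviously,'' which is a welcome clarification in the weighted setting.
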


\section{Conclusion}
In this paper we provided complete complexity classifications of the problem
of enumerating all satisfying assignments of a propositional $B$-formula for every set $B$ of allowed connectives, imposing the orders of non-decreasing weight and non-increasing weight. We also considered the weighted variant, where the variables are assigned a non-negative integer weight. We obtained a complete classification for the weighted variant when imposing the order of non-decreasing weight and remained with one open case for the order of non-increasing weight when the connectives are $0$-separating of degree $2$. Interesting are the polynomial delay algorithms we obtained. They either relay on combinatorial bounds allowing a brute force approach, or on the use of a priority queue which necessarily leads to an exponential space usage. Future research could affront the open case, but should also investigate the question of exponential space: can it be avoided, or is it inherent to these problems, in particular to $\subsetsum$ without polynomial bounds on the weights?

\medskip
\noindent
\textbf{Acknowledgements.} The author would like to thank Johan Thapper for combinatorial support.


\begin{thebibliography}{10}

\bibitem{BeMeThVo08imp}
O.~Beyersdorff, A.~Meier, M.~Thomas, and H.~Vollmer.
\newblock The complexity of propositional implication.
\newblock {\em Inf. Process. Lett.}, 109(18):1071--1077, 2009.

\bibitem{BoCrGaReScVo12}
E.~B{\"o}hler, N.~Creignou, M.~Galota, S.~Reith, H.~Schnoor, and H.~Vollmer.
\newblock Complexity classifications for different equivalence and audit
  problems for boolean circuits.
\newblock {\em Logical Methods in Computer Science}, 8(3), 2012.

\bibitem{BuDyGoJaJeRe12}
A.~A. Bulatov, M.~E. Dyer, L.~A. Goldberg, M.~Jalsenius, M.~Jerrum, and
  D.~Richerby.
\newblock The complexity of weighted and unweighted {\#}csp.
\newblock {\em J. Comput. Syst. Sci.}, 78(2):681--688, 2012.

\bibitem{CrHe97}
N.~Creignou and J.-J. H{\'e}brard.
\newblock {O}n {G}enerating {A}ll {S}olutions of {G}eneralized {S}atisfiability
  {P}roblems.
\newblock {\em ITA}, 31(6):499--511, 1997.

\bibitem{CrOlSc11}
N.~Creignou, F.~Olive, and J.~Schmidt.
\newblock Enumerating all solutions of a {B}oolean {CSP} by non-decreasing
  weight.
\newblock In {\em Proc. of 14th International Conference on Theory and
  Applications of Satisfiability Testing (SAT'2011)}, Lecture notes in computer
  science, pages 120--133, 2011.

\bibitem{CrScTh12}
N.~Creignou, J.~Schmidt, and M.~Thomas.
\newblock Complexity classifications for propositional abduction in post's
  framework.
\newblock {\em J. Log. Comput.}, 22(5):1145--1170, 2012.

\bibitem{DuHeKo05}
A.~Durand, M.~Hermann, and P.~G. Kolaitis.
\newblock Subtractive reductions and complete problems for counting complexity
  classes.
\newblock {\em Theor. Comput. Sci.}, 340(3):496--513, 2005.

\bibitem{ErKoRa61}
P.~Erd\H{o}s, Chao Ko, and R.~Rado.
\newblock Intersection theorem for system of finite sets.
\newblock {\em Quart. J. Math. Oxford Ser.}, 12:313--318, 1961.

\bibitem{JoPaYa88}
D.~S. Johnson, C.~H. Papadimitriou, and M.~Yannakakis.
\newblock On generating all maximal independent sets.
\newblock {\em Inf. Process. Lett.}, 27(3):119--123, 1988.

\bibitem{KhSuWi97}
S.~Khanna, M.~Sudan, and D.~P. Williamson.
\newblock A complete classification of the approximability of maximization
  problems derived from {B}oolean constraint satisfaction.
\newblock In {\em STOC}, pages 11--20, 1997.

\bibitem{KiSa06}
B.~Kimelfeld and Y.~Sagiv.
\newblock Incrementally computing ordered answers of acyclic conjunctive
  queries.
\newblock In {\em NGITS}, pages 141--152, 2006.

\bibitem{Lewis79}
H.~Lewis.
\newblock Satisfiability problems for propositional calculi.
\newblock {\em Mathematical Systems Theory}, 13:45--53, 1979.

\bibitem{pos41}
E.~Post.
\newblock The two-valued iterative systems of mathematical logic.
\newblock {\em Ann. Math. Stud.}, 5:1--122, 1941.

\bibitem{Reith03}
S.~Reith.
\newblock On the complexity of some equivalence problems for propositional
  calculi.
\newblock In {\em Proc.\ 28th MFCS}, volume 2747 of {\em LNCS}, pages 632--641,
  2003.

\bibitem{Schaefer78}
T.~J. Schaefer.
\newblock The complexity of satisfiability problems.
\newblock In {\em Proceedings 10th Symposium on Theory of Computing}, pages
  216--226. ACM Press, 1978.

\bibitem{Schmidt09}
J.~Schmidt.
\newblock Enumeration: Algorithms and complexity.
\newblock Preprint (2009), available at
  \url{http://www.thi.uni-hannover.de/fileadmin/forschung/arbeiten/schmidt-da.pdf}.

\bibitem{ScSc07}
H.~Schnoor and I.~Schnoor.
\newblock Enumerating all solutions for constraint satisfaction problems.
\newblock In {\em STACS}, pages 694--705, 2007.

\bibitem{Schnorr76}
C.~Schnorr.
\newblock Optimal algorithms for self-reducible problems.
\newblock In {\em ICALP}, pages 322--337, 1976.

\bibitem{StrozeckiPhD10}
Y.~Strozecki.
\newblock Enumeration complexity and matroid decomposition.
\newblock {\em Phd thesis}, 2010.

\bibitem{Thomas09}
M.~Thomas.
\newblock The complexity of circumscriptive inference in {P}ost's lattice.
\newblock In {\em Proc.\ 10th LPNMR}, volume 5753 of {\em Lecture Notes in
  Computer Science}, pages 290--302. Springer Verlag, 2009.

\end{thebibliography}

\newpage

\section{Appendix}

\begin{tabular}{lll}
Proof of Proposition~\ref{prop:SATstarNPhard}\\

Proof of Proposition~\ref{prop:EnumSATinc-VELS0-inDelayP}\\

Proof of Proposition~\ref{prop:minones-np-hard}\\

Proof of Proposition~\ref{prop:wMINONES-S00-NPhard}\\

Proof of Lemma~\ref{lem:crucial}\\

Proof of Lemma~\ref{lem:InvRootWeightSAT}\\

Proof of Proposition~\ref{prop:maxonesstar-np-hard}\\

Proof of Proposition~\ref{prop:wMAXONESstar-S02-inP}\\

\end{tabular}

\bigskip

\noindent
Proof of Proposition~\ref{prop:SATstarNPhard} ($\SAT^*(B)$ is $\NP$-complete if $\CloneS_{12} \subseteq [B]$)
\begin{proof}
  $\NP$-membership is obvious. For the hardness, we give a reduction from the satisfiability problem for $B$-formul{\ae}. We know from \cite{Lewis79} that $\SAT(B')$ is $\NP$-complete if $\CloneS_1 \subseteq [B']$. Since $\CloneS_1 = [\CloneS_{12} \cup \{\false\}] \subseteq [B \cup \{\false\}]$, we conclude that $\SAT(B \cup \{\false\})$ is $\NP$-complete. Let $\varphi$ be a $B\cup \{\false\}$-formula with variable set $x_1, \dots, x_n$. We construct $\varphi' = \varphi[\false / f] \land \bigwedge_{i=1}^n t \land (f \rightarrow x_i)$. It is not difficult to verify that $\varphi$ is satisfiable if and only if $\varphi'$ admits a non-trivial model. Note that $\land$ and $x \land (y \imp z)$ are in $\CloneS_{12}$ and have therefore a $B$-representation. We transform $\varphi'$ into the final $B$-formula by replacing the connectives $\land$ and $x \land (y \imp z)$ with their $B$-representations. We avoid exponential blowup by a compact $\varphi'$: write the $n$-ary conjunction as a balanced tree of the binary conjunction $\land$.
\end{proof}

%

\bigskip

\noindent
Proof of Proposition~\ref{prop:EnumSATinc-VELS0-inDelayP}
($\EnumSATinc(X) \in \SpaceDelayP$ for $X \in \{\CloneV$, $\CloneE$, $\CloneL$, $\CloneS_0$\})
  \begin{proof}
In the first case a $B$-formula can be seen as disjunction of variables and constants.
All assignments are models, with the possible exception of $\vec{0}$.
We obviously can enumerate those assignments by order of non-decreasing weight by standard combinatorial methods.

In the second case a $B$-formula can be seen as conjunction of variables and constants.
If this disjunction contains a constant $C_0$, then there are no models.
Otherwise $\vec{1}$ is the only model, up to fictive variables occurring in constants $C_1$.
Again we can enumerate those assignments by order of non-decreasing weight by standard combinatorial methods.

In the third case a $B$-formula can be seen as linear equation over GF(2).
Therefore, the set of models is either the set of assignments with an even number of non-fictive variables set to $1$, or the set with an odd number of non-fictive variables set to $1$. Again, all these models can be enumerated by non-decreasing weight by standard combinatorial methods.

The fourth case is treated in the paper.
	\end{proof}

\bigskip

\noindent
Proof of Lemma~\ref{lem:crucial} (Let $f\in \CloneS_0^2$ be an $n$-ary Boolean function and let $k$ be an integer such that $n/2 \leq k \leq n$. Then the number of models of weight $k$ is at least $\binom{n-1}{k-1}$.)

\begin{proof}
Since the functions of $\CloneS_0^2$ are 0-separating of degree 2, the statement of this lemma is nothing else than a disguised form of the Erd\H{o}s-Ko-Rado Theorem \cite{ErKoRa61}: Set $r = n-k$ and represent an assignment as subset of $\{1, \dots, n\}$ by the set of indexes of the coordinates which are set to $0$. Then $A$ corresponds to the set of non-models.
\end{proof}

\bigskip

\noindent
Preparations for Proof of Proposition~\ref{prop:minones-np-hard} ($\MINONES(B)$ is $\NP$-hard if
$\CloneS_{10} \subseteq [B]$ or $\CloneS_{00}^n \subseteq [B]$ for an $n \geq 2$ or $\CloneD_2 \subseteq [B]$)

  \begin{lemma}\label{lem:minones_constant_1}
  	Let $\CloneE_2 \subseteq [B]$. Then $\MINONES(B\cup \{\true\}) \leqlogm \MINONES(B)$.
  \end{lemma}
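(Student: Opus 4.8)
My plan is to use the familiar device of turning the constant $\true$ into a fresh variable $t$ and then forcing $t$ to be $1$ by a single top-level conjunction, the only real care being to ensure the output is a genuine $B$-formula of size linear in the input, so that no compactness argument and no exponential blow-up is needed. I first record what the hypothesis gives: $\CloneE_2 \subseteq [B]$ says exactly that $x \land y$ has a $B$-representation; I would fix one, and using the remark after the definition of $B$-representation (and the fact that $\land$ has no fictive coordinate) choose it fictive-variable-free, so that it is a $B$-formula $\psi_\land(u,v)$ in exactly the two variables $u,v$ computing $u \land v$.

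Given a $(B\cup\{\true\})$-formula $\varphi$ over $x_1,\dots,x_n$, I would let $t$ be a new variable, set $\varphi_0 := \varphi[\true/t]$, and note that every connective of $\varphi_0$ now comes from $B$, so $\varphi_0$ is already a $B$-formula over $x_1,\dots,x_n,t$ (no transformation, no blow-up). Then I define $\varphi'$ to be the $B$-formula obtained from $\psi_\land$ by substituting $\varphi_0$ for $u$ and $t$ for $v$. The observation that makes this cheap is that this substitution is performed once only, not nested inside anything, so $\varphi'$ contains merely a constant number of copies of $\varphi_0$ and $|\varphi'| = O(|\varphi|)$; since $\psi_\land$ is fixed, $\varphi'$ is computable from $\varphi$ in logarithmic space, and $\Vars{\varphi'} = \{x_1,\dots,x_n,t\}$, with $t$ genuinely occurring and no spurious variable introduced.

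For correctness I would check that under any assignment $\sigma$ to $\{x_1,\dots,x_n,t\}$ the formula $\varphi'$ evaluates to $1$ iff $\varphi_0$ evaluates to $1$ under $\sigma$ and $\sigma(t)=1$, and that when $\sigma(t)=1$ the value of $\varphi_0$ coincides with the value of $\varphi$ under the restriction of $\sigma$ to $x_1,\dots,x_n$. Hence the models of $\varphi'$ are precisely the sets $m\cup\{t\}$ with $m$ a model of $\varphi$; consequently $\varphi'$ is satisfiable iff $\varphi$ is, the minimum weight of a model of $\varphi'$ is exactly one more than that of $\varphi$, and a minimum-weight model of $\varphi$ is recovered from one of $\varphi'$ by deleting $t$. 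Thus $\varphi\mapsto\varphi'$ (together with $(\varphi,k)\mapsto(\varphi',k+1)$ on the decision side and the trivial solution recovery) is a logspace many-one reduction witnessing $\MINONES(B\cup\{\true\})\leqlogm\MINONES(B)$.

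The step I expect to require the most care is precisely the interplay that makes $\MINONES$ nonstandard: passing from a $(B\cup\{\true\})$-formula to a $B$-formula must neither inflate the size (handled by inserting $\land$ exactly once, at the top) nor silently add fictive variables or drop essential ones (handled by taking $\psi_\land$ fictive-variable-free and by observing that $t$ is a genuine variable of $\varphi'$), so that the weight bookkeeping is exactly a shift by one and the reduction goes through as stated.
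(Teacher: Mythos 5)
Your proposal is correct and follows essentially the same route as the paper's own proof: replace $\true$ by a fresh variable $t$, conjoin $t$ at the top level using the $B$-representation of $\land$ (available since $\CloneE_2 \subseteq [B]$), and shift the weight bound from $k$ to $k+1$. The extra care you take about a fictive-variable-free representation of $\land$ and about the single top-level substitution avoiding blow-up is implicit in the paper's one-line argument but matches it in substance.
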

  \begin{proof}
  	We map $\varphi$ to $\varphi' = \varphi[\true / t] \land t$ and $k$
  	to $k' = k + 1$. The formula $\varphi'$ can be written as a $B$-formula by
  	replacing the connective $\land$ by its $B$-representation
  	($\land \in \CloneE_2 \subseteq [B]$).
  \end{proof}

\begin{definition}
		We denote by $t_q^p$ the $p$-ary $q$-threshold function, $p > q \geq 2$.
		Define $^d\psi_q^p$ to be a complete $t_q^p$-tree of depth $d$.
\end{definition}
	\begin{lemma}\label{lem:minones-threshold}
		The formula $^d\psi_q^p$ \rem{proof = APPENDIX}
		\begin{enumerate}
			\item has arity $p^d$,
			\item evaluates to $0$ whenever less than $q^d$ inputs are set to $1$, and
			\item evaluates to $1$ whenever more than $p^d - q^d$ inputs are set to $1$.
		\end{enumerate}
  \end{lemma}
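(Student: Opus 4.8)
The plan is to prove the three items simultaneously by induction on the depth $d$, exploiting that by construction ${}^d\psi_q^p$ is the formula obtained by feeding $p$ disjoint copies of ${}^{d-1}\psi_q^p$ into one top gate $t_q^p$, and that $t_q^p$ outputs $1$ on an assignment precisely when at least $q$ of its $p$ arguments are $1$ (with ${}^1\psi_q^p = t_q^p$ on $p$ variables serving as the base case). Item~1 is then immediate: the number of leaves is multiplied by $p$ at each level, so ${}^d\psi_q^p$ has $p^{d}$ leaves, the base case being the arity $p$ of $t_q^p$ itself.

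For item~2 I would argue on the $p$ subtrees. Fix an assignment setting fewer than $q^{d}$ of the $p^{d}$ leaves to $1$. If $q$ of the $p$ subtrees each received at least $q^{d-1}$ ones, those $q$ subtrees alone would account for at least $q\cdot q^{d-1}=q^{d}$ ones, a contradiction; hence at most $q-1$ subtrees receive $\ge q^{d-1}$ ones, so at least $p-q+1$ of them receive $<q^{d-1}$ ones and, by the induction hypothesis, evaluate to $0$. Thus at most $q-1$ of the inputs to the top gate are $1$, and $t_q^p$ outputs $0$. For item~3 I would run the complementary count on zeros: if more than $p^{d}-q^{d}$ leaves are set to $1$, then fewer than $q^{d}$ leaves are set to $0$, so by the same pigeonhole at most $q-1$ subtrees receive $\ge q^{d-1}$ zeros, whence at least $p-q+1$ subtrees receive fewer than $q^{d-1}$ zeros, i.e.\ more than $p^{d-1}-q^{d-1}$ ones, and by the induction hypothesis evaluate to $1$. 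So at least $p-q+1$ inputs to the top gate are $1$, and since $p-q+1\ge q$ the gate outputs $1$.

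There is no genuine obstacle here; the computations are routine counting arguments. The one point to watch is the arithmetic in item~3, which needs $p-q+1\ge q$, i.e.\ $p\ge 2q-1$, both in the inductive step and already in the base case $d=1$ (there ``more than $p-q$ ones'' means at least $p-q+1$ ones, which must force the threshold $q$). This is exactly satisfied by the threshold functions we actually invoke, namely $t_2^{n+1}$ (and $t_2^3$), where $q=2$ and $p\ge 3$; stating the lemma for these functions is what makes item~3 go through cleanly.
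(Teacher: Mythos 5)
Your items 1 and 2 match the paper's proof: the arity count is immediate, and the lower threshold is established by induction on $d$ exactly as you do (the paper phrases the inductive step as ``none of the subtrees can be triggered with fewer than $q^{d}$ ones, so the top gate cannot be triggered with fewer than $q\cdot q^{d}=q^{d+1}$,'' which is your pigeonhole argument). Where you genuinely diverge is item 3. The paper derives it from item 2 in one line ``by the self-duality of ${}^d\psi_q^p$'': if $f$ is self-dual and vanishes on all inputs of weight below $q^d$, then it equals $1$ on all inputs of weight above $p^d-q^d$. You instead run the complementary count on zeros directly, which forces you to track the inequality $p-q+1\ge q$, i.e.\ $p\ge 2q-1$. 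This difference is not cosmetic: $t_q^p$ is self-dual precisely when $p=2q-1$, so the paper's one-line appeal to self-duality is literally valid only for $t_2^3$ and not for $t_2^{n+1}$ with $n>2$, whereas your counting argument covers the whole range $p\ge 2q-1$ that the applications ($q=2$, $p=n+1\ge 3$) actually require. Your closing remark is also on point: item 3 as stated for arbitrary $p>q\ge 2$ fails already at $d=1$ when $p<2q-1$ (e.g.\ $t_4^5$ is not forced to $1$ by two ones), so the implicit hypothesis $p\ge 2q-1$ that you isolate is genuinely needed. In short, your proof is correct, slightly more elementary, and more careful than the paper's on the third item.
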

\begin{proof}
Obviously $^d\psi_q^p$ has arity $p^d$.

We prove the second statement by induction over $d$. For $d = 1$ we have $^1\psi_q^p = t_q^p$ and the statement holds by definition of $t_q^p$. Consider then $^{(d+1)}\psi_q^p = t_q^p(_1^d\psi_q^p, \dots ,_p^d\psi_q^p$). Assuming that none of the $_i^d\psi_q^p$ can be triggered with less than $q^d$ inputs set to 1, we observe that $^{(d+1)}\psi_q^p$ cannot be triggered with less than $q \cdot q^d = q^{d+1}$ inputs set to 1.

The third statement follows from the second by the self-duality of $^d\psi_q^p$.
\end{proof}

\begin{lemma}\label{lem:minones_constant_0}
  	If $t_q^p \in [B]$ for some $p,q \in \N$ with $p > q \geq 2$, then $\MINONES(B\cup \{\false\}) \leqlogm \MINONES(B)$. \rem{proof = APPENDIX, give here maybe a sketch}
  \end{lemma}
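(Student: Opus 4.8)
The plan is to simulate the constant $\false$ by a fresh block of variables that is \emph{expensive to switch on}. Given a $B\cup\{\false\}$-formula $\varphi$ on $x_1,\dots,x_n$ with budget $k$ — where I may assume $k\le n$, since for $k\ge n$ the instance $(\varphi,k)$ is equivalent to $(\varphi,n)$ — I would replace every occurrence of $\false$ in $\varphi$ by one and the same $B$-formula $A$ over fresh variables $f_1,\dots,f_m$ enjoying the following two properties: $A$ evaluates to $0$ when all $f_i$ are $0$, while $A$ can evaluate to $1$ only if at least $n+1$ of the $f_i$ are set to $1$. The reduction then sends $(\varphi,k)$ to $(\varphi[\false/A],k)$. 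The intuition is that in $\MINONES$ every satisfiable instance on $n$ variables has a model of weight at most $n$, so a minimum-weight model of $\varphi[\false/A]$ will never "pay" for switching $A$ on, and $A$ therefore behaves exactly like $\false$.

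To build $A$ I would fix some $p>q\ge 2$ with $t_q^p\in[B]$, set $d:=\lceil\log_q(n+1)\rceil$ so that $q^d\ge n+1$, put $m:=p^d$, and take $A:={}^d\psi_q^p(f_1,\dots,f_m)$. By Lemma~\ref{lem:minones-threshold}, $A$ has arity $p^d$ and evaluates to $0$ whenever fewer than $q^d$ of its inputs are $1$; in particular $A(\vec 0)=0$, and $A=1$ forces at least $q^d\ge n+1$ of the $f_i$ to be $1$, as required. Since $d=O(\log n)$, the complete $t_q^p$-tree $A$ is compact, so after replacing each $t_q^p$-node by a fixed $B$-representation, $A$ is a $B$-formula of polynomial size; likewise $m=p^d\le p\cdot (n+1)^{\log_q p}$ is polynomial in $n$. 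Hence $\varphi[\false/A]$ is a polynomial-size $B$-formula, computable in logarithmic space.

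For correctness, any model $\sigma\models\varphi$ of weight $\le k$ extends to a model of $\varphi[\false/A]$ of the same weight by setting all $f_i=0$: then $A$ evaluates to $0$ and the transformed formula agrees with $\varphi$ on $x_1,\dots,x_n$. Conversely, in any model $\tau$ of $\varphi[\false/A]$ of weight $\le k\le n$, the assignment $\tau$ sets at most $n<q^d$ of the $f_i$ to $1$, so $A$ evaluates to $0$ under $\tau$, whence the restriction of $\tau$ to $x_1,\dots,x_n$ is a model of $\varphi$ of weight $\le k$. The same observation shows that if $\varphi$ is unsatisfiable then every model of $\varphi[\false/A]$ must have $A=1$ and therefore weight $>n\ge k$, so $(\varphi[\false/A],k)$ is a no-instance as well.

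The only place where some care is needed — and the point I expect to be the real content — is the amplification. A single gate $t_q^p$ would make $A$ cost merely the fixed constant $q$ to switch on, and $q$ need not exceed $n$, so the reduction would be unsound. The depth-$d$ threshold tree of Lemma~\ref{lem:minones-threshold} raises this activation cost to $q^d$, and taking $d$ only logarithmic in $n$ keeps the $p$-ary tree — hence also $\varphi[\false/A]$ — of polynomial size and of logarithmic nesting depth, so that the final substitution of $B$-representations for the threshold gates incurs no blow-up. Everything else is routine bookkeeping.
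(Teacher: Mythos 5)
Your proposal is correct and follows essentially the same route as the paper's proof: both simulate $\false$ by a compact depth-$O(\log n)$ complete $t_q^p$-tree on fresh variables, which by Lemma~\ref{lem:minones-threshold} cannot evaluate to $1$ unless at least $q^d \geq n+1$ of them are set to $1$, so no model within the budget $\min(k,n)$ can afford to trigger it. Your choice of $d=\lceil\log_q(n+1)\rceil$ and your up-front normalization $k\le n$ match the paper's $q^{d-1}<n<q^d$ and $k'=\min(n,k)$ exactly.
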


  \begin{proof}
  	Let $(\varphi, k)$ be an instance of $\MINONES(B\cup \{\false\})$,
  	$n$ the number of variables in $\varphi$.
		Choose $d\in\N$ such that $q^{d-1} < n < q^d$ (\ie, $d-1 < \log_q(n) < d$)
		and we obtain with Lemma~\ref{lem:minones-threshold} that $^d\psi_q^p$
		has arity and size polynomial in $n$
		and evaluates to $0$ whenever less than $n+1$ inputs are set to $1$
		($n + 1 \leq q^d$).

		Denote by $^dF_q^p$ the $B$-formula obtained from $^d\psi_q^p$ by replacing each
		$t^p_q$ by its $B$-representation. Exponential blowup does not occur since
		$^d\psi_q^p$ is compact.
  	We finally map $(\varphi, k)$ to $(\varphi', k')$, where
  	$\varphi' = \varphi[\false / ^dF_q^p(y_1, \dots, y_{p^d})]$
  	and $k' = \min(n,k)$
  	and the $y_1, \dots, y_{p^d}$ are fresh variables.
  	One easily verifies that $\varphi$ admits a model of weight $\leq k$ if and only if $\varphi'$ admits a model of weight $\leq \min(n,k)$.
  \end{proof}

\bigskip

\noindent
Proof of Proposition~\ref{prop:minones-np-hard} ($\MINONES(B)$ is $\NP$-hard if
$\CloneS_{10} \subseteq [B]$ or $\CloneS_{00}^n \subseteq [B]$ for an $n \geq 2$ or $\CloneD_2 \subseteq [B]$)
  \begin{proof}
  	In the first case we have $\CloneE_2 \subseteq \CloneS_{10} \subseteq [B]$, so we obtain by Lemma~\ref{lem:minones_constant_1} that $\MINONES(B\cup \{\true\}) \leqlogm \MINONES(B)$.
Since $\{\land, \lor\} \subseteq \CloneM_1 = [\CloneS_{10} \cup \{\true\}] \subseteq [B\cup \{\true\}]$, we can reduce from $\MINONES$(positive-2CNF) ($\NP$-hard according to \cite{KhSuWi97}) to $\MINONES(B\cup \{\true\})$ by replacing every connective ($\land, \lor$) by its ($B\cup \{\true\}$)-representation, avoiding exponential blowup by a compact representation of the multi-ary conjunction in the 2CNF-formula.

  	In the second case we have that $t_2^{n+1} \in [B]$, in the third case we have that $t_2^3 \in [B]$.
  	In both cases we obtain by Lemma~\ref{lem:minones_constant_0} that $\MINONES(B\cup \{\false\}) \leqlogm \MINONES(B)$.
  	Since further in both cases it holds $\{\land, \lor\} \subseteq \CloneM_2 \subseteq [B\cup \{\false\}]$, we can 
  	reduce from $\MINONES$(positive-2CNF) to $\MINONES(B\cup \{\false\})$ by replacing every connective ($\land, \lor$) by its ($B\cup \{\false\}$)-representation
  	(avoiding exponential blowup by a compact formula).
  \end{proof}

\noindent
Proof of Proposition~\ref{prop:wMINONES-S00-NPhard} ($\wMINONES(B)$ is $\NP$-hard)
  \begin{proof}
We give a reduction from $\MINONES(B \cup \{\false\})$, an $\NP$-hard problem according to Theorem~\ref{thm:minones_classification}, since $\CloneD_2 \subseteq \CloneM_0 = [\CloneS_{00} \cup \{\false\}] \subseteq [B \cup \{\false\}]$.
Let $(\varphi, k)$ be an instance of $\MINONES(B\cup \{\false\})$, $n$ the number of variables in $\varphi$.
We map $(\varphi, k)$ to $(\varphi', k')$, where $\varphi' = \varphi[\false / f]$, $f$ a fresh variable and $k' = \min(n,k)$.
We set the weight of all variables from $\varphi$ to $1$ and the weight of $f$ to $n+1$.
One easily verifies that $\varphi$ admits a model of weight $\leq k$ if and only if $\varphi'$ admits a model of weight $\leq \min(n,k)$.
  \end{proof}

\bigskip

\noindent
Proof of Lemma~\ref{lem:InvRootWeightSAT} ($\invRootWSAT$ is $\NP$-complete)
\begin{proof}
The $\NP$-membership is obvious: guess an assignment of weight $\geq n - \sqrt{n}$ and verify whether it is a satisfying one.

By flipping all literals in an instance of $\invRootWSAT$, we obtain equivalence to 

\problemdef
{$\RootWSAT$}
{a 3CNF-formula $\varphi$ of $n$ variables}
{does $\varphi$ admit a model of weight $\leq \sqrt{n}$?}

We give a reduction from $\MINONES$(3CNF) ($\NP$-hard according to \cite{KhSuWi97}) to $\RootWSAT$.
Let $(\varphi, k)$ be an instance of $\MINONES$(3CNF), let $n$ be the number of variables of $\varphi$.
\begin{itemize}

\item
If we have $k \geq \sqrt{n}$, we set $\ell = \min(k,n)$ and $r = \ell^2 - n$ and map the instance $(\varphi,k)$ to $\varphi' = \varphi \land \neg y_1 \land \dots \land \neg y_r$.
\\\\
Let $(\varphi,k) \in \MINONES$. By definition of $\ell$ this implies that $(\varphi,\ell) \in \MINONES$. That is, there is a model $\sigma$ of weight $w(\sigma) \leq \ell$. $\varphi'$ has $n' = n+r = \ell^2$ variables. The model $\sigma$ can be extended to a model $\sigma'$ of $\varphi'$ by setting additionally the variables $y_1, \dots, y_r$ to $0$. This $\sigma'$ has then weight 
$$w(\sigma') = w(\sigma) \leq \ell = \sqrt{n'}.$$
That is, $\varphi' \in \RootWSAT$.
\\\\
Conversely, let $\varphi' \in \RootWSAT$. That is, there is a model $\sigma'$ with
$$w(\sigma') \leq \sqrt{n'} =  \ell$$
Construct $\sigma$ from $\sigma'$ by removing the assignments for the $y_i$ and we get that $\sigma$ satisfies $\varphi$ and $w(\sigma) \leq \ell$.
That is, $(\varphi,\ell) \in \MINONES$ and, since $\ell \leq k$, $(\varphi,k) \in \MINONES$.

\item
Else if $k < \sqrt{n}$, we set $t = \sqrt{n-k+\frac{1}{4}}-k+\frac{1}{2}$.
One easily verifies that
\begin{equation}\label{equ:useful0}
k+t = \sqrt{n + t}\\
\end{equation}
and that consequently
\begin{equation}\label{equ:useful2}
\sqrt{n + \lfloor t\rfloor} \leq \sqrt{n+t} = k + t < k + \lfloor t\rfloor+1
\end{equation}
From the property that for $x\geq 0$ we have $\lfloor\sqrt{x}\rfloor \leq \sqrt{\lfloor x\rfloor}$ one obtains that
\begin{equation}\label{equ:useful1}
k + \lfloor t \rfloor = \lfloor k + t \rfloor =  \lfloor \sqrt{n + t} \rfloor \leq \sqrt{\lfloor n + t\rfloor} = \sqrt{n+\lfloor t\rfloor}
\end{equation}
\\\\
Set now $r = \left\lfloor t\right\rfloor$ and map the instance $(\varphi,k)$ to $\varphi' = \varphi \land y_1 \land \dots \land y_r$.
\\\\
Let $(\varphi,k) \in \MINONES$. That is, there is a model $\sigma$ of weight $w(\sigma) \leq k$. $\varphi'$ has $n' = n+r$ variables.
The model $\sigma$ can be extended to a model $\sigma'$ of $\varphi'$ by setting additionally the variables $y_1, \dots, y_r$ to $1$. This $\sigma'$ has then weight
$$\textstyle w(\sigma') = w(\sigma)+r \leq k + r 
\overset{(\ref{equ:useful1})}{\leq}
\sqrt{n+r} = \sqrt{n'}.$$
That is, $\varphi' \in \RootWSAT$.
\\\\
Conversely, let $\varphi' \in \RootWSAT$. That is, there is a model $\sigma'$ with $w(\sigma') \leq \sqrt{n'} = \sqrt{n+r} \overset{(\ref{equ:useful2})}{<} k + r + 1$.
That is, $w(\sigma') \leq k + r$.
By construction of $\varphi'$, we know that $\sigma'$ has to set all $y_1, \dots, y_r$ to $1$. Thus, by reducing $\sigma'$ by the values for the $y_i$, we obtain an assignment $\sigma$ with $w(\sigma) \leq k$. By construction of $\varphi'$, we conclude that $\sigma$ is indeed a model of $\varphi$. That is, $(\varphi,k) \in \MINONES$.

\end{itemize}

\bigskip

To prove the second statement, add enough dummy variables to fill up to the next power of $3$, be $d$ such that $3^{d-1} < n \leq 3^d$.
Force the right amount of them to $1$ and the rest to $0$ in order to map models of weight $\leq \sqrt{n}$ to models of weight $\leq \sqrt{3^d}$ and models of weight $> \sqrt{n}$ to models of weight $> \sqrt{3^d}$.
\end{proof}

\bigskip

\noindent
Proof of Proposition~\ref{prop:maxonesstar-np-hard} ($\MAXONES^*(B)$ is $\NP$-hard if $\CloneS_{12} \subseteq [B]$ or $\CloneD_1 \subseteq [B]$)
\begin{proof}
In the first case we reduce from $\SAT^*(B)$ via $\varphi \mapsto (\varphi,1)$ and conclude with Proposition~\ref{prop:SATstarNPhard}.

In the second case  we give a reduction from $\invRootWSAT$ with the assumption that the number of variables is a power of $3$.
	Let $\varphi$ be an instance of $\invRootWSAT$ with variable set $x_1, \dots, x_n$, where $n = 3^d$.
	We transform $\varphi$ in several steps. We assume $\varphi$ to be a compact $\{\land, \lor, \neg\}$-formula.	
	
	\begin{enumerate}
	
	\item Since $\{\land, \lor, \neg\} \subseteq \CloneBF = [\CloneR_1 \cup \{\false\}]$ and $\CloneR_1 = [\lor, =]$, we can transform $\varphi$ into a compact
	$\{\lor, =, \false\}$-formula $\varphi_1$ by replacing the connectives $\{\land, \lor, \neg\}$ with their $\{\lor, =, \false\}$-representations.
	
	\item Set $\varphi_2 = \varphi_1[\false/f] \land \bigwedge_{i=1}^{n}(f \to x_i)$, where $f$ is a fresh variable and the $n$-ary conjunction be represented in
				a compact way by the binary conjunction.
				Note that $\varphi_2$ is now a compact $\{\lor, =, \land, \to\}$-formula.
	
	\item Since $\{\lor, =, \land, \to\} \subseteq \CloneR_1 = [\CloneD_1 \cup \{\true\}]$ and $\CloneD_1 = [d_1]$, we can transform $\varphi_2$ into a compact $\{d_1, \true\}$-formula
				$\varphi_3$ by replacing the connectives $\{\lor, =, \land, \to\}$ with their $\{d_1, \true\}$-representations.
	
	\item Set $\varphi_4 = \varphi_3[\true/ ^d\psi_2^3(x_1, \dots x_{3^d})]$. Observe that $\varphi_4$ is a compact $\{d_1, t_2^3\}$-formula of size polynomial in the size of $\varphi$
	with variable set $x_1, \dots, x_n,f$.
	
	\item At last, since $d_1,t_2^3 \in D_1 \subseteq [B]$, we can transform $\varphi_4$ into $\varphi_5$ by replacing the connectives $d_1,t_2^3$ with their $B$-representations. Note that $\varphi_5$ is still of polynomial size, since $\varphi_4$ is compact.
	\end{enumerate}

	We finally map $\varphi$ to $(\varphi_5, \lceil n - \sqrt{n} \rceil)$.
	It is not difficult to verify that $\varphi$ admits a model of weight $w$ with $\lceil n-\sqrt{n} \rceil \leq w \leq n$ if and only if $\varphi_5$ admits a model of weight $v$ with
	$\lceil n-\sqrt{n} \rceil \leq v < n+1 = |\Vars{\varphi_5}|$ (to pass from one model to another, add/remove the assignment $f = 0$).

\end{proof}

\bigskip

\noindent
Proof of Proposition~\ref{prop:wMAXONESstar-S02-inP} ($\wMAXONES^*(\CloneS_0^2) \in \P$)
\begin{proof}
Let $n$ be the number of variables in an instance.
We obtain a model different from $\vec{1}$ of maximal weight by searching among the set of assignments with one $0$ and $(n-1)$ $1$'s.
The number of such assignments is obviously polynomial in $n$ and the property of being 0-separating of degree 2 guarantees us among them a non-empty set of models.
Obviously a maximum weight model different from $\vec{1}$ is among them (all assignments with more than one $0$ are of less or equal weight).
\end{proof}

\end{document}